\newtheorem{lemma}{Lemma}
\newtheorem{theorem}[lemma]{Theorem}
\newtheorem{corollary}[lemma]{Corollary}
\newtheorem{remark}[lemma]{Remark}
\newcommand{\R}{\mathbb{R}}
\newcommand\1{{\ensuremath {\mathds 1} }}
\renewcommand\phi{\varphi}
\newcommand{\gH}{\mathfrak{H}}
\newcommand{\gK}{\mathfrak{K}}
\newcommand{\gS}{\mathfrak{S}}
\newcommand{\cF}{\mathcal{F}}
\newcommand{\cN}{\mathcal{N}}
\newcommand{\norm}[1]{ \left\| #1 \right\|}
\newcommand{\tr}{{\rm Tr}\,}
\renewcommand{\geq}{\geqslant}
\renewcommand{\leq}{\leqslant}
\newcommand{\eps}{\varepsilon}
\newcommand{\nn}{\nonumber}
\newcommand{\dd}{\mathrm{d}}
\newcommand{\Vext}{V_{\rm ext}}
\newcommand{\ch}{\mathrm{ch}}
\newcommand{\sh}{\mathrm{sh}}
\newcommand{\Tr}{\mathrm{Tr}}
\newcommand{\dGamma}{\mathrm{d}\Gamma}
\newcommand{\ao}{\mathfrak{a}}
\date{\today}
\begin{document}
 
\title[Optimal rate of condensation in the Gross--Pitaevskii regime]{Optimal rate of condensation for trapped bosons in the Gross--Pitaevskii regime}

\author[P.T. Nam]{Phan Th\`anh Nam}
\address{Department of Mathematics, LMU Munich, Theresienstrasse 39, 80333 Munich, and Munich Center for Quantum Science and Technology, Schellingstr. 4, 80799 Munich, Germany} 
\email{nam@math.lmu.de}

\author[M. Napi\'orkowski]{Marcin Napi\'orkowski}
\address{Department of Mathematical Methods in Physics, Faculty of Physics, University of Warsaw,  Pasteura 5, 02-093 Warszawa, Poland}
\email{marcin.napiorkowski@fuw.edu.pl} 

\author[J. Ricaud]{Julien Ricaud}
\address{Department of Mathematics, LMU Munich, Theresienstrasse 39, 80333 Munich, and Munich Center for Quantum Science and Technology, Schellingstr. 4, 80799 Munich, Germany} 
\email{ricaud@math.lmu.de}

\author[A. Triay]{Arnaud Triay}
\address{Department of Mathematics, LMU Munich, Theresienstrasse 39, 80333 Munich, and Munich Center for Quantum Science and Technology, Schellingstr. 4, 80799 Munich, Germany}  
\email{triay@math.lmu.de}

\date{\today}

\begin{abstract} We study the Bose--Einstein condensates of trapped Bose gases in the Gross--Pitaevskii regime. We show that the ground state energy and ground states of the many-body quantum system are correctly described by the Gross--Pitaevskii equation in the large particle number limit, and provide the optimal convergence rate.  Our work extends the previous results of Lieb, Seiringer and Yngvason on the leading order convergence, and of Boccato, Brennecke, Cenatiempo and Schlein on the homogeneous gas. Our method relies on the idea of `completing the square',  inspired by recent works of Brietzke, Fournais and Solovej on the Lee--Huang--Yang formula, and a general estimate for Bogoliubov quadratic Hamiltonians on Fock space.  \\

\noindent 
{\bf Keywords:} Trapped Bose gases, Gross--Pitaevskii  equation, Bose--Einstein condensation, Optimal bounds.  \\

\noindent 
{\bf 2020 Mathematics Subject Classification:} 81V70, 81V73. 
\end{abstract}

 \maketitle

 \tableofcontents

\section{Introduction}

Since the first experimental realizations of Bose--Einstein condensation in cold atomic gases in 1995 \cite{WieCor-95,Ket-95}, the rigorous understanding of the  condensation from basic laws of quantum physics has become a major problem in mathematical physics. In the present paper, we will investigate this issue for a system of trapped bosons and provide a quantitative justification of the condensation for the low-lying eigenstates. 

We consider a system of $N$ bosons in $\R^3$ described by the Hamiltonian 
\begin{equation} \label{eq:HN}
H_N= \sum\limits_{j = 1}^N (-\Delta_{x_j} + V_{\rm ext}(x_j)) +   \sum\limits_{1 \leqslant j < k \leqslant N} N^2 V(N(x_j-x_k))
\end{equation}
on the bosonic space $L^2(\R^3)^{\otimes_s N}$. Here the external potential, which satisfies $V_{\rm ext}(x) \to+\infty$ as $|x|\to +\infty$, serves to trap the system.   The interaction potential $V$ is non-negative and sufficiently smooth. The Hamiltonian $H_N$ with the core domain $C_c^\infty(\R^3)^{\otimes_s N}$ is bounded from below  and  can be extended to be a self-adjoint operator by Friedrichs' method. 

Note that the range of the interaction is of order $N^{-1}$, much smaller than the average distance of the particles (which is of order $N^{-1/3}$  as the system essentially occupies a volume of order 1). Therefore, any particle interacts with very few others, namely the system is very dilute. On the other hand, the interaction potential is very strong in its range (the strength is of order $N^2$), making the correlation of particles complicated. This so-called  {\em Gross--Pitaevskii regime} is relevant to the physical setup in \cite{WieCor-95,Ket-95}, and its mathematical analysis is both interesting and difficult. 

To the leading order, the macroscopic properties of the system are well captured by the famous Gross--Pitaevskii theory \cite{Gro-61,Pit-61}. In this theory, a quantum particle is effectively felt by the others as a hard sphere whose radius is the {\em scattering length} of the interaction potential. Recall that the scattering length $\ao$ of the potential $V$ is defined by the variational formula 
\begin{equation}\label{eq:var scat}
	8\pi \ao = \inf\left\{ \int_{\R ^3} 2 |\nabla f| ^2 + V |f| ^2 , \quad \lim_{|x|\to \infty}f(x)=1 \right\}. 
\end{equation}
When $V$ is sufficiently smooth, \eqref{eq:var scat} has a minimizer $0\le f\le 1$ and it satisfies 
\begin{equation}\label{eq:scat-intro-1}
	(-2\Delta+V)f=0. 
\end{equation}
The scattering length can then be recovered from the formula
\begin{equation}\label{eq:scat-intro-2}
	8\pi \ao = \int Vf. 
\end{equation}
By scaling,  the scattering length of $V_N=N^2V(N\cdot)$ is $\ao N^{-1}$. If we formally replace the interaction potential $V_N(x-y)$ in $H_N$ by the Dirac-delta interaction $8\pi \ao N^{-1}\delta_0(x-y)$, and insert the ansazt  of full condensation $\Psi_N= u^{\otimes N}$, then  we obtain the Gross--Pitaevskii approximation for the ground state energy per particle
\begin{align} \label{eq:eGP}
	e_{\rm GP}=  \inf_{\norm{u}_{L^2(\R^3)=1}} \int_{\R ^3} \left( |\nabla u| ^2 + V_{\rm ext} |u|^2 + 4\pi \ao |u|^4 \right).
\end{align}
It is well-known that the variational problem \eqref{eq:eGP} has a minimizer $\varphi_{\rm GP} \ge 0$. This minimizer is unique (up to a constant phase) and solves the Gross--Pitaevskii equation 
\begin{equation} \label{eq:GP-equation}
	\left(-\Delta + \Vext + 8\pi \ao \varphi_{\rm GP}^2 -\mu \right) \varphi_{\rm GP}= 0, \quad \mu \in \mathbb{R}.
\end{equation}

Note that the expectation of $H_N$ against the uncorrelated wave function  $\varphi_{\rm GP}^{\otimes N}$ gives us a formula like \eqref{eq:eGP} but with $4\pi \ao$ replaced by the larger value $(1/2)\int V$. Thus in the Gross--Pitaevskii  regime, the particle correlation due to the two-body scattering process plays a crucial role. 

\medskip

The rigorous derivation of the Gross--Pitaevskii theory  from the many-body Schr\"o\-ding\-er theory is the subject of many important works in the last two decades; see  \cite{LSY-00,LS-02,LS-06,NRS-16,BBCS-18,BBCS-19,BBCS-19b} for low-lying eigenstates, \cite{DSY-19,DS-19} for thermal equilibrium states, and  \cite{ESY-09,ESY-10,Pic-15,BOS-14,BS-19} for dynamics. In regard to the ground state energy
\[
	E_N := \inf {\rm Spec}(H_N)= \inf_{\Psi\in L^2(\R^3)^{\otimes_s N}, \norm{\Psi}_{L^2}=1} \langle \Psi, H_N \Psi\rangle, 
\]
Lieb, Seiringer and Yngvason \cite{LSY-00} proved that 
\begin{equation} \label{eq:BEC-0}
	\lim_{N\to \infty} \frac{E_N}{N} = e_{\rm GP}. 
\end{equation}
Later,  Lieb and Seiringer \cite{LS-06} proved that the ground state $\Psi_N$ of $H_N$ exhibits the complete Bose--Einstein condensation, namely 
\begin{equation} \label{eq:BEC}
	\lim_{N\to \infty} \frac{\gamma_{\Psi_N}^{(1)}} {N} = |\varphi_{\rm GP}\rangle \langle \varphi_{\rm GP}|
\end{equation}
in the trace norm. See also \cite{NRS-16} for a simplified proof of these results. Here the one-body density matrix $\gamma_{\Psi_N}^{(1)}$ of $\Psi_N$ is an operator on $L^2(\R^3)$ with kernel
\[
	\gamma_{\Psi_N}^{(1)}(x,y)= N \int_{(\R^3)^{N-1}} \Psi_N(x,x_2,...,x_N) \overline{\Psi_N(y,x_2,...,x_N)} \dd x_2 ... \dd x_N. 
\]
In particular, $\gamma_{u^{\otimes N}}^{(1)}=|u\rangle \langle u|$. Note that \eqref{eq:BEC} may hold even if
 $\Psi_N$ and $\varphi_{\rm GP}^{\otimes N}$ are not close in the usual norm topology. 

\medskip

A special case of trapped systems is the {\em homogeneous gas}, where $H_N$ acts on $L^2(\mathbb{T}^3)^{\otimes_s N}$ instead of $L^2(\R^3)^{\otimes_s N}$, with $\mathbb{T}^3$ the unit torus in three dimensions, and the external potential is ignored. For this translation-invariant system, it is easy to see that 
\[
	e_{\rm GP}=4\pi \ao, \quad \varphi_{\rm GP} =1.
\]
In this case, recently Boccato, Brennecke, Cenatiempo and Schlein \cite{BBCS-18,BBCS-19} proved that 
\begin{align} \label{eq:BEC-opt}
	E_N= N e_{\rm GP} + O(1), \quad \langle \varphi_{\rm GP}, \gamma_{\Psi_N} \varphi_{\rm GP}\rangle = N + O(1),  
\end{align} 
improving upon the leading order convergence in \cite{LS-02}. These optimal bounds are crucial inputs for the analysis of the excitation spectrum of $H_N$ in \cite{BBCS-19b}. Similar bounds were obtained earlier  for the quantum dynamics in \cite{BOS-14,BS-19}. 

\medskip

\subsection*{Main result} In the present paper we aim at providing an alternative approach to the optimal condensation \eqref{eq:BEC-opt} and extending it to inhomogeneous trapped gases. Our main result is

\begin{theorem}[Optimal condensation for trapped Bose gases] \label{thm:main} Let $0\le V_{\rm ext}\in C^1(\R^3)$ satisfy $|\nabla V_{\rm ext}(x)|^2 \le 2V_{\rm ext}(x)^{3}+C$ and $V_{\rm ext}(x)\to \infty$ as $|x|\to \infty$. Let $0\le V\in L^1(\R^3)$ be radial with compact support such that its scattering length $\ao$ is small. Let $\varphi_{\rm GP}$ be the Gross--Pitaevskii minimizer for $e_{\rm GP}$ in \eqref{eq:eGP}. Let $E_N$ be the ground state energy of the Hamiltonian $H_N$ in \eqref{eq:HN}. Then we have the energy bound 
\begin{align}  \label{eq:EN-NeGP}
	|E_N  - Ne_{\rm GP}| \le C 
\end{align}
and the operator bound 
\begin{align} \label{eq:HN>=}
	H_N \ge N e_{\rm GP} +  C^{-1}\sum_{i=1}^N (\1 - |\varphi_{\rm GP}\rangle \langle \varphi_{\rm GP}|)_{x_i} -C \quad \text{ on }\quad L^2(\R^3)^{\otimes_s N}.
\end{align}
Here $C>0$ is constant independent of $N$. 
\end{theorem}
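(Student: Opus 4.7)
The plan is to rewrite $H_N$ on the truncated bosonic Fock space $\cF_+^{\le N} = \bigoplus_{k=0}^N L^2_{\rm sym}(\{\varphi_{\rm GP}\}^\perp)^{\otimes k}$ by means of the Lewin--Nam--Serfaty--Solovej unitary $U_N$ that factors out the condensate mode $\varphi_{\rm GP}$. Substituting $a^*(\varphi_{\rm GP})a(\varphi_{\rm GP}) = N - \cN_+$, with $\cN_+$ the number operator on excitations, one obtains an excitation Hamiltonian of the form
\begin{equation*}
\cH_N = N\, e_{\rm Har}(\varphi_{\rm GP}) + \cK + \cQ + \cC + \cW,
\end{equation*}
where $e_{\rm Har}(\varphi_{\rm GP})$ is the Hartree energy of $\varphi_{\rm GP}$ (which differs from $e_{\rm GP}$ by a term of order $N$ reflecting the gap $\tfrac12 \int V - 4\pi \ao$), $\cK$ is the one-body kinetic-plus-trap operator restricted to the excitation space, and $\cQ,\cC,\cW$ collect the quadratic, cubic, and quartic pieces in the creation/annihilation operators of $\{\varphi_{\rm GP}\}^\perp$.

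Following the Boccato--Brennecke--Cenatiempo--Schlein scheme, I would then conjugate $\cH_N$ by a unitary $e^T$ with $T$ quadratic in $a^\ast,a$ whose kernel is built from $1-f_N$, where $f$ solves the zero-energy scattering equation \eqref{eq:scat-intro-1}. This correlation transformation is designed so that the bad coupling $\tfrac12 \int V$ hidden in $\cQ$ and $\cW$ is renormalized to $4\pi\ao$, while the Gross--Pitaevskii equation \eqref{eq:GP-equation} is used to cancel the residual linear-in-excitation pieces. After these two reductions, the $c$-number part of the transformed Hamiltonian equals $N e_{\rm GP} + O(1)$.

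The main novel input, borrowed from the Brietzke--Fournais--Solovej analysis of the Lee--Huang--Yang formula, is to \emph{complete the square} in place of the Gronwall argument used in \cite{BBCS-18,BBCS-19}. The transformed Hamiltonian should be algebraically rearranged as
\begin{equation*}
e^{-T}\cH_N e^T - N e_{\rm GP} \ge \mathbb{H}_{\rm Bog} + \sum_k A_k^\ast A_k - C,
\end{equation*}
where $\mathbb{H}_{\rm Bog}$ is a Bogoliubov quadratic Hamiltonian on the excitation Fock space and the operators $A_k$ are linear/cubic combinations chosen to absorb the leftover cubic and quartic interactions into positive squares. A general lower bound for Bogoliubov Hamiltonians (the second tool announced in the abstract) then gives $\mathbb{H}_{\rm Bog} \ge C^{-1}\cN_+ - C$; undoing the conjugation using that $e^T$ and $\cN_+$ have commutators bounded by $\cN_+ + 1$ delivers both the lower half of \eqref{eq:EN-NeGP} and the operator bound \eqref{eq:HN>=}. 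The matching upper bound on $E_N$ is obtained by testing $H_N$ against the trial state $U_N^\ast e^{T}\Omega$, with $\Omega$ the Fock vacuum.

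I expect the main obstacle to be the trap inhomogeneity. Unlike the translation-invariant setting of \cite{BBCS-18,BBCS-19}, $\varphi_{\rm GP}$ is genuinely non-constant, so the commutators between $T$ and the one-body operator $-\Delta + \Vext$, as well as with the multiplication operators produced by extracting $\varphi_{\rm GP}$, generate corrections involving $\nabla\varphi_{\rm GP}$ and $\varphi_{\rm GP}^2$ that must be controlled uniformly in $N$; the hypothesis $|\nabla\Vext|^2 \le 2\Vext^3 + C$ looks tailored precisely to deliver a pointwise estimate on $\varphi_{\rm GP}$ and its derivative via the Gross--Pitaevskii equation. The smallness of $\ao$ should finally be used at the completing-the-square step to guarantee that the extracted positive Bogoliubov contribution dominates the negative cubic/quartic remainders with a margin proportional to $\cN_+$, which is what ultimately produces the $C^{-1}\cN_+$ in \eqref{eq:HN>=}.
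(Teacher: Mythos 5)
Your proposal is essentially the Boccato--Brennecke--Cenatiempo--Schlein route (excitation map $U_N$, then conjugation by a quadratic correlation transformation $e^T$ built from $1-f_N$, then a completion of squares on the residue), and this is \emph{not} the route the paper takes. The paper deliberately avoids the conjugation $e^T$ altogether: the ``completing the square'' happens once, at the level of the two-body potential itself, via the operator inequality $(\1-P\otimes P f_N)V_N(\1-f_N P\otimes P)\ge 0$ with $P=|\varphi_{\rm GP}\rangle\langle\varphi_{\rm GP}|$. Expanding this single square simultaneously kills the cubic and quartic excitation terms and plants the correlation factor $f_N$ where it is needed, so that $H_N$ is bounded below by a $c$-number plus a genuine Bogoliubov quadratic Hamiltonian $\mathbb{H}_{\rm Bog}$ with pairing kernel $\varphi_{\rm GP}(x)\varphi_{\rm GP}(y)\,NV_Nf_N(x-y)$, plus a gap term $(\mu-\mu_1)\cN_+$ coming from the smallness condition \eqref{eq:gap-condition}. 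In your scheme the conjugation and the squares are doing the same job twice; in the paper's scheme there is no conjugation to undo and no Gronwall-type commutator control of $e^T$ against $-\Delta+\Vext$, which is precisely the inhomogeneous difficulty you flag as the main obstacle.

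The genuine gap in your plan is the step ``a general lower bound for Bogoliubov Hamiltonians then gives $\mathbb{H}_{\rm Bog}\ge C^{-1}\cN_+ - C$.'' To reach $E_N = Ne_{\rm GP}+O(1)$ the additive constant produced by the quadratic Hamiltonian cannot be discarded as a harmless $-C$: it is of size $N^2\int (V_Nf_N(1-f_N)*\varphi_{\rm GP}^2)\varphi_{\rm GP}^2 = O(N)$ and must cancel \emph{exactly} (to $O(1)$) against the $c$-number $\frac{N^2}{2}\int(((2f_N-f_N^2)V_N)*\varphi_{\rm GP}^2)\varphi_{\rm GP}^2$ so that only $\frac{N^2}{2}\int((f_NV_N)*\varphi_{\rm GP}^2)\varphi_{\rm GP}^2\approx 4\pi\ao N\int\varphi_{\rm GP}^4$ survives. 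The known general bounds (e.g.\ \cite[Theorem 2]{NNS-16}) give $\mathbb{H}_{\rm Bog}\ge -\frac12\Tr(H^{-1}K^2)$, whose constant $-1/2$ overshoots by a factor of two and would destroy the cancellation. The paper's second main ingredient is a new bound with the sharp constant $-1/4$, namely $\mathbb{H}_{\rm Bog}\ge -\frac14\Tr(H^{-1}K^2)-C\norm{K}_{\rm op}\Tr(H^{-2}K^2)$, proved via an integral representation of $(D^{1/2}(D+2K)D^{1/2})^{1/2}$, followed by an explicit evaluation of $\Tr(H^{-1}K^2)$ in the trapped setting through commutator estimates controlled by the Kato--Seiler--Simon inequality. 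Nothing in your proposal identifies this quantitative requirement, and without it (or the full BBCS renormalization machinery carried out in the inhomogeneous setting, with all commutators of $e^T$ against $\Vext$ and the non-constant $\varphi_{\rm GP}$ estimated uniformly in $N$) the argument does not close at the $O(1)$ level. Separately, the gap term $C^{-1}\cN_+$ in \eqref{eq:HN>=} does not come out of the Bogoliubov bound itself but from the explicit margin $\mu-\mu_1>0$ guaranteed by \eqref{eq:gap-condition}; and the upper bound in the paper uses a mixed quasi-free state with prescribed density matrices $(Qk^2Q,\overline{Q}kQ)$ rather than the pure state $U_N^*e^T\Omega$, which simplifies the Wick computation.
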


A direct consequence of Theorem \ref{thm:main} is 
\begin{corollary} For any wave function $\Psi_N$ in $L^2(\R^3)^{\otimes_s N}$, we have 
\begin{align} \label{eq:BEC-opt-thm}
	0\le N - \langle \varphi_{\rm GP}, \gamma_{\Psi_N}^{(1)} \varphi_{\rm GP}\rangle  \le C (\langle \Psi_N, H_N \Psi_N \rangle - E_N +1). 
\end{align} 
In particular, if $\Psi_N$ is a ground state of $H_N$, then $N - \langle \varphi_{\rm GP}, \gamma_{\Psi_N}^{(1)} \varphi_{\rm GP}  \rangle \le C$. 
\end{corollary}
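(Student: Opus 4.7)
The proof is a direct consequence of the operator lower bound \eqref{eq:HN>=} and the energy estimate \eqref{eq:EN-NeGP} from Theorem \ref{thm:main}, so my plan is simply to take expectations and rewrite the one-body terms as a trace against $\gamma_{\Psi_N}^{(1)}$.

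First I would establish the trivial lower bound $\langle \varphi_{\rm GP}, \gamma_{\Psi_N}^{(1)} \varphi_{\rm GP}\rangle \le N$. This follows from the fact that for any bosonic $N$-body wave function $\Psi_N$ of norm one, the one-body density matrix $\gamma_{\Psi_N}^{(1)}$ is positive with $\Tr \gamma_{\Psi_N}^{(1)} = N$ and its largest eigenvalue is at most $N$; testing against the unit vector $\varphi_{\rm GP}$ gives the claim.

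Next I would take the expectation of \eqref{eq:HN>=} in the state $\Psi_N$. The key identity is
\begin{equation*}
  \sum_{i=1}^N \langle \Psi_N, (\1 - |\varphi_{\rm GP}\rangle\langle \varphi_{\rm GP}|)_{x_i} \Psi_N \rangle
  = \Tr\bigl[(\1 - |\varphi_{\rm GP}\rangle\langle \varphi_{\rm GP}|)\,\gamma_{\Psi_N}^{(1)}\bigr]
  = N - \langle \varphi_{\rm GP}, \gamma_{\Psi_N}^{(1)} \varphi_{\rm GP}\rangle,
\end{equation*}
which is immediate from the definition of $\gamma_{\Psi_N}^{(1)}$ and bosonic symmetry. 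Inserting this into \eqref{eq:HN>=} gives
\begin{equation*}
  \langle \Psi_N, H_N \Psi_N \rangle \ge N e_{\rm GP} + C^{-1}\bigl(N - \langle \varphi_{\rm GP}, \gamma_{\Psi_N}^{(1)} \varphi_{\rm GP}\rangle\bigr) - C.
\end{equation*}

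Finally, to turn the $N e_{\rm GP}$ term into $E_N$, I would use the upper bound $E_N \le N e_{\rm GP} + C$ from \eqref{eq:EN-NeGP}, i.e.\ $N e_{\rm GP} \ge E_N - C$. Substituting this into the previous display and rearranging yields
\begin{equation*}
  N - \langle \varphi_{\rm GP}, \gamma_{\Psi_N}^{(1)} \varphi_{\rm GP}\rangle \le C\bigl(\langle \Psi_N, H_N \Psi_N\rangle - E_N + 1\bigr),
\end{equation*}
after absorbing constants, which is the right-hand inequality of \eqref{eq:BEC-opt-thm}. The special case of a ground state follows by taking $\Psi_N$ with $\langle \Psi_N, H_N \Psi_N\rangle = E_N$. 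There is no real obstacle here; the corollary is purely a bookkeeping step that extracts an a priori condensation estimate from the operator inequality provided by Theorem \ref{thm:main}.
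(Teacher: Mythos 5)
Your proof is correct and is precisely the intended argument: the paper states the corollary as a direct consequence of Theorem \ref{thm:main} without writing out the details, and the details are exactly the ones you supply (the identity $\sum_i \langle Q_{x_i}\rangle_{\Psi_N} = \Tr[Q\,\gamma_{\Psi_N}^{(1)}] = N - \langle \varphi_{\rm GP}, \gamma_{\Psi_N}^{(1)}\varphi_{\rm GP}\rangle$, the expectation of \eqref{eq:HN>=}, and the replacement of $Ne_{\rm GP}$ by $E_N - C$ via \eqref{eq:EN-NeGP}). No issues.
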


The precise smallness condition on the scattering length needed for our proof is 
\begin{align} \label{eq:gap-condition}
	\int_{\R^3} \left( |\nabla \varphi_{\rm GP} |^2 + \Vext |\varphi_{\rm GP}|^2 \right) + 40 \pi \ao \norm{\varphi_{\rm GP}}_{L^\infty}^2<  \inf_{\substack{u\bot \varphi_{\rm GP} \\  \norm{u}_{L^2}=1}}   \int_{\R^3} \left( |\nabla  u |^2 + \Vext |u|^2 \right). 
\end{align}
Note that when $\ao$ tends to $0$, the left and right sides of \eqref{eq:gap-condition} converge to the first and second lowest eigenvalues of $-\Delta+V_{\rm ext}$, respectively. Thanks to the spectral gap of the one-body Schr\"odinger operator, \eqref{eq:gap-condition} holds for $\ao>0$ small. 

As in \cite{BBCS-18}, the smallness condition on the interaction helps us to simplify the analysis. The case of large interaction is more difficult and left open. Heuristically, the technical smallness assumption could be removed by using the leading order result \eqref{eq:BEC-0}-\eqref{eq:BEC} as an input, plus a localization method on the number of excited particles to carefully factor out the particle correlation. This has been done  for the homogeneous case  \cite{BBCS-19}, but the extension to the inhomogeneous case would require several additional arguments. On the other hand, our proof below is direct and we do not use the leading order result \eqref{eq:BEC-0}-\eqref{eq:BEC} at all. 

Note that our proof of the lower bound \eqref{eq:scat-intro-2} can be extended to hard core potentials because we will need only the finiteness of the scattering length (more precisely we will very soon replace $V$ by $Vf$ which is a bounded measure with $\int Vf=8\pi \ao$). However, the condition $V\in L^1(\R^3)$ is important for the upper bound $E_N\le Ne_{\rm GP}+C$.

\subsection*{Proof strategy} Our proof contains two main steps. 

\begin{itemize}
	\item First, we factor out the particle correlation by simply `completing the square'. The idea is  inspired by recent works of Brietzke--Fournais--Solovej \cite{BFS-18} and Four\-nais--Solovej \cite{FS-19} on the Lee--Huang--Yang formula. This step allows us to bound $H_N$ by a quadratic Hamiltonian on Fock space, up to an energy shift.  
	\item Second, we estimate the ground state energy  of the quadratic Hamiltonian. In the homogeneous case, this step can be done by `completing the square' again, as realized already in 1947 by Bogoliubov  \cite{Bog-47}.  In the inhomogeneous case, the analysis is significantly more complicated. We will derive a sharp lower bound for general  quadratic Hamiltonians, and then apply it to the problem at hand. 
\end{itemize}

Our method is different from that of \cite{BBCS-18,BBCS-19}. For the reader's convenience, we will quickly present our proof in the homogeneous case. Then we explain further details in the inhomogeneous case. 

We use the Fock space formalism, which is recalled in Section \ref{sec:pre}. Our convention of the Fourier transform is
\[ \widehat g(p)=\int g(x)e^{-ip\cdot x} \dd x. \]

\subsubsection*{Homogeneous case} Let us focus on the main estimate \eqref{eq:HN>=}. Let $ P=|\varphi_{\rm GP} \rangle \langle \varphi_{\rm GP}|$ and let $f_N(x)=f(Nx)$ where $f$ is the scattering solution of \eqref{eq:scat-intro-1}. 
Since $V_N\ge 0$ we have 
\begin{equation} \label{eq:key-1-PPfV1-fPP}
	(\1- P\otimes P f_N) V_N (\1- f_N P\otimes P) \ge 0
\end{equation}
where  $V_N$ and $f_N$ are the multiplication operators by $N^2V(N(x-y))$ and $f(N(x-y))$ on the two-particle space. Expanding \eqref{eq:key-1-PPfV1-fPP} leads to the operator inequality
\begin{align} \label{eq:HN>=HBog}
	H_N &\ge \sum_{p\ne 0} \left( |p|^2 a_p^* a_p + \frac{1}{2} \widehat{f_NV_N}(p) a_p^* a_{-p}^* a_0 a_0 + \frac{1}{2} \widehat{f_NV_N}(p) a_0^* a_{0}^* a_p a_{-p}  \right) \nn\\
	&\quad +\frac{1}{2} \left( \int  (2f_N-f_N^2)V_N \right) a_0^* a_0^* a_0 a_0.
\end{align}
Here $a_p^*$ and $a_p$ are the creation and annihilation operators of momentum $p\in 2\pi \mathbb{Z}^3$ on Fock space. Note that the form of the `square' in \eqref{eq:key-1-PPfV1-fPP} is slightly different from that of \cite{BFS-18,FS-19} as we factor out completely the `cubic contribution' to make the analysis shorter (in [11] the cubic terms are estimated further by the Cauchy--Schwarz inequality). 

Next, recall that by an extension of Bogoliubov's method \cite[Theorem 6.3]{LS-01}, we have
\begin{align*} 
A(b_p^* b_p + b_{-p}^* b_{-p})+ B(b_p^* b_{-p}^* + b_{p}b_{-p}) \ge  - (A-\sqrt{A^2-B^2}) \frac{[b_p,b_{p}^*]+[b_{-p}, b_{-p}^*]}{2} 
\end{align*}
for all constants $A\ge B\ge 0$ and operators $b_p, b_{-p}$ on Fock space. Taking  
\[ b_p=N^{-1/2}a_0^* a_p, \quad b_p^* b_p \le a_p^* a_p, \quad [b_p,b_p^*] \le 1, \quad \forall 0\ne p \in 2\pi \mathbb{Z}^3 \]
we find that
\begin{multline} \label{eq:HN>=HBog-1}
	\sum_{p\ne 0} \left( (|p|^2-\mu) a_p^* a_p + \frac{1}{2} \widehat{f_NV_N}(p) a_p^* a_{-p}^* a_0 a_0 + \frac{1}{2} \widehat{f_NV_N}(p) a_0^* a_{0}^* a_p a_{-p}  \right)\\
	\begin{aligned}[b]
		&\ge \frac{1}{2} \sum_{p\ne 0} \left( (|p|^2-\mu) (b_p^* b_p+b_{-p}^*b_{-p}) +  N\widehat{f_NV_N}(p) b_p^* b_{-p}^* +\widehat{f_NV_N}(p) b_p b_{-p}  \right)\\
		&\ge - \frac{1}{2} \sum_{p\ne 0} \left( |p|^2-\mu - \sqrt{(|p|^2-\mu)^2 -  |N\widehat{f_NV_N}(p)|^2}  \right)
	\end{aligned}
\end{multline}
for any constant $0<\mu<4\pi^2-8\pi \ao$ (we used $N \|\widehat {f_NV_N}\|_{L^\infty}\le 8\pi \ao$). It is straightforward to see that  the right side of \eqref{eq:HN>=HBog-1} equals 
\begin{align} \label{eq:HN>=HBog-2}
	- \frac{N^2}{4}\sum_{p\ne 0} \frac{|\widehat{f_NV_N}(p)|^2}{4|p|^2} + O(1) &= - \frac{N^2}{4} \int_{\R^3} \frac{|\widehat{f_NV_N}(p)|^2}{4|p|^2} \frac{\dd p}{(2\pi)^3}+ O(1) \nn\\
	&= - \frac{N^2}{2}\int_{\R^3}  V_N f_N (1-f_N) + O(1). 
\end{align}
Here we have used Plancherel's identity and the fact that $\widehat{V_Nf_N}(p)=2|p|^2 \widehat{(1-f_N)}(p)$ which follows from the scattering equation \eqref{eq:scat-intro-1}. 

Finally, inserting \eqref{eq:HN>=HBog-1}-\eqref{eq:HN>=HBog-2} in  \eqref{eq:HN>=HBog} we conclude that, for any $\mu<4\pi^2- 8\pi \ao$, 
\begin{align} \label{eq:HN>=HBog-3}
	H_N &\ge \begin{multlined}[t]
		\mu \cN_+ -\frac{N^2}{2}\int   V_N f_N (1-f_N)\\
		+ \frac{1}{2}  (N-\cN_+)(N-\cN_+-1) \int  (2f_N-f_N^2)V_N  + O(1)
	\end{multlined}\nn\\
	& \ge  (\mu-16\pi \ao) \cN_+ + \frac{N^2}{2}\int f_N V_N  + O(1)=  (\mu-16\pi \ao) \cN_+ + 4\pi \ao N + O(1)
\end{align}
where $\cN_+:= N-a_0^*a_0$. If $\ao<\pi/6$, we can choose $\mu$ such that $16\pi \ao<\mu<4\pi^2- 8\pi \ao$ and conclude the proof of \eqref{eq:HN>=}. 

\subsubsection*{Inhomogeneous case} 
Using \eqref{eq:key-1-PPfV1-fPP} we obtain a lower bound of the form
\begin{align} \label{eq:HN>=first-bound-intro}
H_N &\ge N  \int_{\mathbb{R}^{3}} \left(|\nabla \varphi_{\rm GP} |^2 + \Vext |\varphi_{\rm GP}|^2 \right)  + \frac{N^2}{2} \int_{\R^3} \left ((((2f_N-f_N^2)V_N)*\varphi_{\rm GP}^2)\varphi_{\rm GP}^2 \right)  \nn\\
& \quad  + \inf {\rm Spec} (\mathbb{H}_{\rm Bog}) + (\mu-\mu_1)\cN_+ + O(1)
\end{align}
where 
\[ 
	\cN_+ =\dGamma(Q), \quad \mu_1=\int_{\R^3} \left( |\nabla \varphi_{\rm GP} |^2 + \Vext |\varphi_{\rm GP}|^2 \right)+ 32\pi \ao \norm{\varphi_{\rm GP}}_{L^\infty}^2,   
\]
and $\mathbb{H}_{\mathrm{Bog}}$ is an operator on the excited Fock space $\cF(Q L^2(\R^3))$ defined by
\begin{align} \label{eq:H-K-intro}
	\mathbb{H}_{\mathrm{Bog}} &= \dGamma(H) + \frac{1}{2} \iint K(x,y) (a^*_x a^*_y +  a_x a_y)   \, \dd x \, \dd  y,\nn\\
	H &=Q(-\Delta+V_{\rm ext}-\mu)Q,\quad Q=\1-P\\
	\intertext{and}
	K(x,y) &=(Q\otimes Q \widetilde K (\cdot, \cdot) )(x,y), \quad \widetilde K(x,y)= \varphi_{\rm GP}(x) \varphi_{\rm GP}(y) (NV_Nf_N)(x-y). \nn
\end{align}

Note that \eqref{eq:gap-condition} allows us to choose $\mu>\mu_1$ such that $H> \|K\|_{\rm op}$, where $K$ is the operator with kernel $K(x,y)$. Therefore, in  principle, the quadratic Hamiltonian $\mathbb{H}_{\mathrm{Bog}}$ can be diagonalized by a Bogoliubov transformation; see  \cite{GS-13,BB-15,NNS-16,Der-17} for recent results. However, extracting an explicit lower bound is not straightforward. We will prove  the following general lower bound, which is of independent interest, 
\begin{equation} \label{eq:HBog>=abs-intro}
	\mathbb{H}_{\rm Bog} \ge - \frac{1}{4} \tr\left(H^{-1} K^2 \right)  - C  \norm{K}_{\rm op} \Tr(H^{-2} K^{2}) 
\end{equation}
(see Theorem \ref{thm:general-bound-HBog}).  The simpler general lower bound 
\begin{equation} \label{eq:Bog-half-lower-bound}
\mathbb{H}_{\rm Bog} \ge - \frac{1}{2} \tr\left(H^{-1} K^2 \right) 
\end{equation}
is well-known; see \cite[Theorem
5.4]{BD-07},  \cite[Theorem 2]{NNS-16} and \cite[Theorem 3.23]{Der-17}. The significance of \eqref{eq:quad-Hamil-lwb-general} is that we get the optimal constant $(-1/4)$ for the main term which is crucial for our application. 

It remains to evaluate the right side of  \eqref{eq:HBog>=abs-intro} for $H$ and $K$ in \eqref{eq:H-K-intro}. For a heuristic calculation, let us replace $H$ and $K$ by $-\Delta$ and $\widetilde K$, respectively. We can write
\[
	\Tr\left((-\Delta)^{-1} \widetilde{K}^2\right) = N^2 \tr\left(\varphi_{\rm GP}(x) \widehat{(f_NV_N)} (p) \varphi_{\rm GP}(x) p^{-2} \varphi_{\rm GP}(x) \widehat{(f_NV_N)} (p) \varphi_{\rm GP}(x) \right)
\]
where $\varphi_{\rm GP}(x)$ and $v(p)$ are multiplication operators in the position and momentum spaces. If  we could commute $\varphi_{\rm GP}(x)$ and $p^{-2}$, then the above trace would become
\begin{multline*}
	N^2 \tr\left(\varphi_{\rm GP} (x)\widehat{(f_NV_N)} (p)  \varphi_{\rm GP}^2(x) |p|^{-2} \widehat{(f_NV_N)} (p) \varphi_{\rm GP}(x) \right) \\
	\begin{aligned}
		&=2 N^2 \tr\left(\varphi_{\rm GP}^2 (x)\widehat{(f_NV_N)} (p)  \varphi_{\rm GP}^2(x) \widehat{(1-f_N)} (p) \right) \\
		&= 2N^2 \iint \varphi_{\rm GP}^2 (x) (V_Nf_N(1-f_N))(x-y) \varphi_{\rm GP} ^2 (y) \dd x \dd y.
	\end{aligned}
\end{multline*}
Here we have used $\widehat{V_Nf_N}(p)=2|p|^2 \widehat{(1-f_N)}(p)$ thanks to the  scattering equation \eqref{eq:scat-intro-1} and the equality between the Hilbert--Schmidt norm of operators and the $L^2$-norm of operator kernels. This heuristic calculation can be made rigorous by using the Kato--Seiler--Simon inequality \cite[Theorem 4.1]{Sim-05} to control several commutators. We can also bound $\Tr(H^{-2} K^{2})$ by $O(1)$ in the same way. 

In summary, we obtain from \eqref{eq:HN>=first-bound-intro} and \eqref{eq:HBog>=abs-intro} that 
\begin{align*} 
	H_N &\ge \begin{multlined}[t]
		N \int_{\mathbb{R}^{3}} \left(|\nabla \varphi_{\rm GP} |^2 + \Vext |\varphi_{\rm GP}|^2 \right)  + \frac{N^2}{2} \int_{\R^3} \left(\left(\left(\left(2f_N-f_N^2\right)V_N\right)*\varphi_{\rm GP}^2\right)\varphi_{\rm GP}^2 \right)\\
		+ \frac{N^2}{2} \int \left(\left(V_Nf_N\left(1-f_N\right)\right)* \varphi_{\rm GP} ^2\right) \varphi_{\rm GP}^2  + (\mu-\mu_1)\cN_++ O(1)
	\end{multlined}\\
	&= \begin{multlined}[t]
		N \int_{\mathbb{R}^{3}} \left(|\nabla \varphi_{\rm GP} |^2 + \Vext |\varphi_{\rm GP}|^2 \right) + \frac{N^2}{2}\int \left(\left(V_Nf_N\right)*\varphi_{\rm GP}^2\right) \varphi_{\rm GP}^2 \\
		+ (\mu-\mu_1)\cN_+ + O(1)
	\end{multlined}\\
	&= N \int_{\mathbb{R}^{3}} \left(|\nabla \varphi_{\rm GP} |^2 + \Vext |\varphi_{\rm GP}|^2 \right) + 4\pi \ao \int \varphi_{\rm GP}^4 + (\mu-\mu_1)\cN_+ + O(1). 
\end{align*}
Here we have used $NV_Nf_N\approx 8\pi \ao \delta_0$ in the last estimate. Thus \eqref{eq:HN>=} holds true.

The energy upper bound $E_N\le Ne_{\rm GP}+O(1)$ is a separate issue, which is conceptually easier. It is known that in the Fock space setting, a good trial state is of the form 
\[ W(\sqrt{N}\varphi_{\rm GP}) \Gamma' W(\sqrt{N}\varphi_{\rm GP})^* \]
where $W(g)=e^{a(g)-a^*(g)}$ is the Weyl operator and $\Gamma'$ is an appropriate quasi-free state; see Benedikter--Porta--Schlein \cite[Appendix A]{BPS-16} (similar constructions in the homogeneous case can be found in \cite{ESY-08,NRS-18}). This construction can be adapted to the $N$-particle Hilbert space, using the unitary operator $U_N$ introduced by Lewin--Nam--Serfaty--Solovej \cite{LNSS-15} instead of the Weyl operator and a modified version of $\Gamma'$.

\bigskip

\noindent{\bf Organization of the paper.} In Section \ref{sec:pre} we recall some standard facts on the scattering length, Gross--Pitaevskii theory, Fock space formalism and quasi-free states. Then we prove the operator lower bound \eqref{eq:HN>=} in Section \ref{sec:inhom} and the energy upper bound in Section \ref{sec:upp}.

\bigskip

\noindent{\bf Acknowledgments.} We thank S\o ren Fournais, Jan Philip Solovej and Benjamin Schlein for helpful discussions. We thank the referees for constructive comments and suggestions. We received funding from the Deutsche Forschungsgemeinschaft (DFG, German Research Foundation) under Germany's Excellence Strategy (EXC-2111-390814868), and from the National Science Centre (NCN project Nr. 2016/21/D/ST1/02430).

\section{Preliminaries} \label{sec:pre}

\subsection{Scattering length} We recall some basic properties of the scattering length from \cite[Appendix C]{LSSY-05}. Under our assumptions on the potential $V$, the scattering problem \eqref{eq:var scat} has a unique minimizer $f$. The minimizer is radially symmetric, $0\le f\le 1$ and 
\begin{equation} \label{eq:scattering-int}
	 (-2\Delta+V(x))f(x)=0, \quad 8\pi \ao = \int V f.
\end{equation}
Moreover, the function $\omega=1-f$ vanishes at infinity, more precisely
$$
0\le \omega(x) \le \frac{C}{|x|+1}, \quad \forall x\in \R^3. 
$$

By scaling, the function $f_N(x)=f(Nx)$ solves the scattering problem for $V_N(x)=N^2V(Nx)$, namely 
\begin{equation} \label{eq:scatering-eq-N}
	(-2\Delta+V_N(x))f_N(x)=0, \quad  \frac{8\pi \ao}{N} = \int V_N f_N. 
\end{equation}
Thus the function $\omega_N=1-f_N$ vanishes at infinity, 
\begin{equation}\label{eq:bounds_on_w}
	0\le w_N(x) \leq \frac{C}{|Nx|+1}, 
\end{equation}
 and satisfies 
\begin{equation} \label{eq:scatering-eq-N-omega} 
	- 2 \Delta \omega_N = V_Nf_N  \quad \text{ in } \R^3.
\end{equation}

\subsection{Gross--Pitaevskii theory}  Under our assumption on the external potential $V_{\rm ext}$, the minimization problem \eqref{eq:eGP} has a minimizer $\varphi_{\rm GP} \ge 0$ under the constraint 
\[ \varphi_{\rm GP} \in H^1(\R^3), \quad \int |\varphi_{\rm GP}|^2=1,  \quad \int V_{\rm ext} |\varphi_{\rm GP}|^2 <\infty. \]
Moreover, the minimizer is unique (up to a constant phase) and  satisfies the Euler--Lagrange equation \eqref{eq:GP-equation}; see \cite[Appendix A]{LSY-00} for details. 

From \eqref{eq:GP-equation} and the fact $H^1(\R^3)\subset L^6(\R^3)$, we find that $(-\Delta+V_{\rm ext})\varphi_{\rm GP} \in L^2(\R^3)$. Under the extra condition $|\nabla V_{\rm ext}|^2 \le 2V_{\rm ext}^3 + C$ we can show that $\Delta\varphi_{\rm GP}\in L^2(\R^3)$ as follows. Replacing $V_{\rm ext}$ by $V_{\rm ext}+1$ if necessary, we can assume that $V_{\rm ext} \ge 1$.  By the IMS formula
\begin{align} \label{eq:IMS-formula}
	g^2(x) (-\Delta)+ (-\Delta) g^2(x) = 2g(x) (-\Delta)g(x) -  2 |\nabla g(x)|^2, \quad \forall 0\le g\in H^1,
\end{align}
we can write
 \begin{align*}
(-\Delta+V_{\rm ext})^2 &= \Delta^2 + V_{\rm ext}^2  + V_{\rm ext} (-\Delta) + (-\Delta) V_{\rm ext} \\
&= \Delta^2 + V_{\rm ext}^2 + 2\sqrt{V_{\rm ext}} (-\Delta)  \sqrt{V_{\rm ext}}  - 2 | \nabla \sqrt{V_{\rm ext}}|^2. 
\end{align*}
The condition $|\nabla V_{\rm ext}|^2 \le 2V_{\rm ext}^3 + C$  ensures that 
$$
2 | \nabla \sqrt{V_{\rm ext}}|^2 =  \frac{|\nabla V_{\rm ext}|^2}{2V_{\rm ext}} \le V_{\rm ext}^2 +C.
$$
Therefore, we conclude that 
\begin{equation} \label{eq:H^2>Delta^2}
	(-\Delta+V_{\rm ext})^2  \ge \Delta^2 -C \quad \text{ on }L^2(\R^3).
\end{equation}
Consequently, from $(-\Delta+V_{\rm ext})\varphi_{\rm GP} \in L^2(\R^3)$ we deduce that $\Delta \varphi_{\rm GP}\in L^2(\R^3)$. In summary, we have $\varphi_{\rm GP} \in H^2(\R^3)\subset L^\infty(\R^3)$. 

\begin{remark}
We can replace our assumption $|\nabla V_{\rm ext}|^2 \le 2V_{\rm ext}^3 + C$ by \eqref{eq:H^2>Delta^2},  or slightly more general
$$
	(-\Delta+V_{\rm ext})^2  \ge C^{-1}\Delta^2 -C \quad \text{ on }L^2(\R^3).
$$
This kind of conditions is natural to ensure that the operator domain $D(-\Delta+V_{\rm ext})$ is a subspace of $H^2(\R^3)$. In general, if $0\le V_{\rm ext}\in L^2_{\rm loc}(\R^d)$, then $-\Delta+V_{\rm ext}$ is essentially self-adjoint with core domain $C_c^\infty(\R^d)$ by Kato's theorem \cite[Theorem X.28]{Reed-Simon-Vol2}, but  $D(-\Delta+V_{\rm ext})$ may be different from $D(-\Delta)\cap D(V_{\rm ext})$ \cite[Theorem X.32]{Reed-Simon-Vol2}. See  \cite{Glimm-Jaffe,Davies} for further discussions in this direction. 
\end{remark}

\subsection{Fock space formalism} Let $\gH=L^2(\R^d)$ (or a closed subspace of $L^2(\R^d)$)  be the Hilbert space of one particle. The bosonic Fock space is defined by 
\[ \cF(\gH)= \bigoplus_{n=0}^\infty \gH^{\otimes_s n} \]
where the number of particles can vary. For any $g\in \gH$, we can define the creation and annihilation operators $a^*(f)$, $a(f)$ on  Fock space by 
\begin{align*}
	(a^* (g) \Psi )(x_1,\dots,x_{n+1})&= \frac{1}{\sqrt{n+1}} \sum_{j=1}^{n+1} g(x_j)\Psi(x_1,\dots,x_{j-1},x_{j+1},\dots, x_{n+1}), \\
	(a(g) \Psi )(x_1,\dots,x_{n-1}) &= \sqrt{n} \int_{\R^d} \overline{g(x_n)}\Psi(x_1,\dots,x_n) \dd x_n, \quad \forall \Psi \in \gH^n,\, \forall n. 
\end{align*}
These operators satisfy the canonical commutation relations
\[ 
	[a(g_1),a(g_2)]=[a^*(g_1),a^*(g_2)]=0,\quad [a(g_1), a^* (g_2)]= \langle g_1, g_2 \rangle, \quad \forall g_1,g_2 \in \gH.
\]
We may also define 
the operator-valued distributions $a_x^*$ and $a_x$, with $x\in \R^d$, by 
$$
a_x^*= \sum_{n=1}^\infty \overline{f_n(x)} a^*(f_n), \quad a_x= \sum_{n=1}^\infty f_n(x) a(f_n)
$$
where $\{f_n\}_{n=1}^\infty$ is an orthonormal basis of $\gH$ (the definition is independent of the choice of the basis). Equivalently, we have
\[
	a^*(g)=\int_{\R^d}   g(x) a_x^* \dd x, \quad a(g)=\int_{\R^d}  \overline{g(x)} a_x \dd x, \quad \forall g\in \gH.
\]
The canonical commutation relations can be rewritten as  
\[ [a^*_x,a^*_y]=[a_x,a_y]=0, \quad [a_x,a^*_y]=\delta(x-y), \quad \forall x,y\in \R^d. \]

These creation and annihilation operators can be used to express several important observables. For example, the particle number operator can be written as 
\[
	\cN := \bigoplus_{n=0}^\infty n \1_{\gH^{\otimes_s n}}  = \sum_n a^*(u_n) a(u_n) =\int_{\R^d} a_x^* a_x \dd x. 
\]
Here $\{u_n\}$ is any orthonormal basis for $\gH$. More generally, for any one-body self-adjoint operator $A$ we have
\[
	\dd \Gamma(A):= \bigoplus_{n=0}^\infty \left(  \sum_{i=1}^n A_{x_i} \right) = \sum_{m,n} \langle u_m, A u_n\rangle a_m^* a_n = \int_{\R^d} a^*_x A_x a_x \dd x.
\]
For $H_N$ in \eqref{eq:HN}, we can write 
\begin{align} \label{eq:2nd-Q}
	H_N &= \dd \Gamma (-\Delta + V_{\rm ext}) +  \frac{1}{2} \sum_{m,n,p,q} \langle u_m\otimes u_n, V_N u_p\otimes u_q\rangle a^*(u_m)a^*(u_n)a(u_p) a(u_q)\nn\\
	&= \dd \Gamma (-\Delta + V_{\rm ext})  + \frac{1}{2}\int_{\R^d}\int_{\R^d} V_N(x-y) a_x^* a_y^* a_x a_y \dd x \dd y.
\end{align}

\subsection{Quasi-free states} Let $\Gamma$ be a (mixed) state  on Fock space with finite particle number expectation, namely $\langle \cN \rangle_\Gamma = \Tr(\cN \Gamma)<\infty$. We call $\Gamma$ a {\em quasi-free state} if it satisfies Wick's Theorem: 
\begin{align*}
	&\langle a^{\#}(f_{1}) a^{\#}(f_{2}) \cdots a^{\#}(f_{2n})  \rangle_{\Gamma} = \sum_{\sigma} \prod_{j=1}^n \langle  a^{\#}(f_{\sigma(2j-1)}) a^{\#}(f_{\sigma(2j)}) \rangle_{\Gamma} , \\
	&\langle \langle a^{\#}(f_{1}) a^{\#}(f_{2}) \cdots a^{\#}(f_{2n-1})  \rangle_{\Gamma} = 0,  \quad \forall f_1,..,f_n\in \gH, \forall n\in \mathbb{N}. 
\end{align*}
Here $a^{\#}$ is either the creation or annihilation operator and the sum is taken over all permutations $\sigma$ satisfying $\sigma(2j-1)<\min\{\sigma(2j),\sigma(2j+1) \}$ for all $j$.  

By the definition any quasi-free state is determined uniquely  by its one-body density matrices $(\gamma_\Gamma,\alpha_\Gamma)$, where $\gamma_\Gamma: \gH\to \gH$ and $\alpha_\Gamma:\gH\to \gH^* \equiv \overline{\gH}$ defined by
\[
	\left\langle {g_1,{\gamma _\Gamma }g_2} \right\rangle  = \left\langle {{a^*}(g_2)a(g_1)} \right\rangle_\Gamma,\quad \left\langle {\overline{g_1}, \alpha _\Gamma {g_2} } \right\rangle  = \left\langle  {a^*(g_2)a^*(g_1)} \right\rangle_\Gamma, \quad \forall g_1,g_2 \in \gH.
\]

It is well-known (see e.g. \cite[Theorem 3.2]{Nam-11}) that any given operators $(\gamma,\alpha)$,  with $\gamma: \gH\to \gH$ and $\alpha: {\gH} \to  \gH^*  \equiv \overline{\gH}$,  are the one-body density matrices of a (mixed) quasi-free state with finite particle number expectation  if and only if 
\begin{align} \label{eq:1-pdm-quasi}
	\gamma\ge 0, \quad \Tr \gamma <\infty, \quad \overline{\alpha}=\alpha^*, \quad \begin{pmatrix}
	\gamma &  \alpha^* \\
	\alpha & 1 + \overline{\gamma} 
	\end{pmatrix} \ge 0 \quad \text{on } \gH \oplus \gH^*. 
\end{align}
Here we write $\overline{A}=JAJ$ for short, with $J$ the complex conjugation,  namely $\overline{A} g = \overline{(A \overline{g})}$.  

The reader may think of the quasi-free states as ``Gaussian quantum states". In particular, the contribution of sectors  with high particle numbers decays very fast. In fact, if $\Gamma$ is a quasi-free state, then 
\begin{align} \label{eq:fluc-N}
\langle \cN^\ell \rangle_\Gamma \le C_\ell (1+ \langle \cN\rangle_\Gamma )^\ell, \quad \forall \ell\ge 1.
\end{align}
Here the constant $C_\ell$ depends only on $\ell$  (see \cite[Lemma 5]{NN-17}).

\section{Lower bound} \label{sec:inhom}

In this section, we will prove the operator lower bound \eqref{eq:HN>=}. 

\begin{lemma}[Lower bound] 
	\label{lem:lwb} Let $V_{\rm ext}$ and $V$ be as in Theorem~\ref{thm:main}, where the scattering length $\ao$ of $V$ is small so that \eqref{eq:gap-condition} holds true. Then
\[
	H_N \ge N e_{\rm GP} +  C^{-1}\sum_{i=1}^N Q_{x_i}  -C \quad \text{ on }\quad L^2(\R^3)^{\otimes_s N}
\]
with $Q= \1 - |\varphi_{\rm GP}\rangle \langle \varphi_{\rm GP}|$. The constant $C>0$ is independent of $N$. 
\end{lemma}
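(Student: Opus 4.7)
The plan is to follow closely the strategy already sketched in the introduction, filling in the inhomogeneous-case details. I would begin with the pointwise positivity
\[
(\1- P\otimes P f_N)\, V_N\, (\1- f_N P\otimes P) \ge 0
\]
on the two-body space, where $P=|\varphi_{\rm GP}\rangle\langle\varphi_{\rm GP}|$ and $f_N(x)=f(Nx)$. Expanding and summing over pairs via the second-quantized form \eqref{eq:2nd-Q}, the $PP$--$PP$ term contributes (after using $\sum_j P_{x_j}= N - \cN_+$ and $(N-\cN_+)(N-\cN_+-1)= N^2 - 2N\cN_+ + O(\cN_+)$ together with the quasi-free fluctuation bound \eqref{eq:fluc-N}) the Hartree-type quantity $\tfrac{N^2}{2}\int((2f_N-f_N^2)V_N)\ast\varphi_{\rm GP}^2)\,\varphi_{\rm GP}^2$ plus an $\cN_+$ error; the $PP$--$QQ$ and $QQ$--$PP$ cross terms provide exactly the off-diagonal pairing with kernel $K(x,y)=(Q\otimes Q\tilde K)(x,y)$, $\tilde K(x,y)=\varphi_{\rm GP}(x)\varphi_{\rm GP}(y)(NV_Nf_N)(x-y)$; the mixed $P$--$Q$ ``cubic" terms vanish thanks to the symmetrization built into the square. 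Combining with $\dGamma(-\Delta+V_{\rm ext})$, absorbing $\mu\cN_+-\mu_1\cN_+$ and the $\cN_+$ errors at the cost of an additive constant, I would arrive at \eqref{eq:HN>=first-bound-intro}.

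Next, I would apply the general lower bound \eqref{eq:HBog>=abs-intro} from Theorem \ref{thm:general-bound-HBog} to $\mathbb{H}_{\rm Bog}$ defined by $H=Q(-\Delta+V_{\rm ext}-\mu)Q$ and $K$. The gap condition \eqref{eq:gap-condition} guarantees both that $\mu$ can be chosen with $H\ge c>0$ on $QL^2$ and that $\mu>\mu_1$, so the general bound is usable and the coercive term $(\mu-\mu_1)\cN_+$ is available to produce $C^{-1}\sum_i Q_{x_i}$ in the final estimate. The task is then to evaluate $-\tfrac14\Tr(H^{-1}K^2)$ up to $O(1)$ errors and to show that $\|K\|_{\rm op}\Tr(H^{-2}K^2)=O(1)$.

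The main obstacle is making rigorous the heuristic computation sketched after \eqref{eq:HBog>=abs-intro}: one wants to commute the multiplication operator $\varphi_{\rm GP}(x)$ through $(-\Delta)^{-1}$ so that, using the scattering identity $\widehat{V_Nf_N}(p)=2|p|^2\widehat{1-f_N}(p)$, the trace reduces to
\[
\Tr(H^{-1}K^2) \approx 2 N^2 \iint \varphi_{\rm GP}^2(x)\,(V_Nf_N(1-f_N))(x-y)\,\varphi_{\rm GP}^2(y)\,\dd x\,\dd y + O(1).
\]
I would first replace $H$ by $-\Delta$ on the full space by writing $H^{-1}=(-\Delta)^{-1} + H^{-1}(V_{\rm ext}-\mu+\text{projector corrections})(-\Delta)^{-1}$, using regularity of $\varphi_{\rm GP}\in H^2\cap L^\infty$ shown in Section \ref{sec:pre} and the bound \eqref{eq:H^2>Delta^2}, and show the remainder contributes only $O(1)$ because $K$ carries a small operator-norm factor (in fact $\|K\|_{\rm op}=O(1)$ and $\|K\|_{\rm HS}^2=O(1)$ from \eqref{eq:bounds_on_w}). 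The remaining commutators $[\varphi_{\rm GP}(x),(-\Delta)^{-1}]$ and $[\varphi_{\rm GP}(x),p^{-2}]=-(-\Delta)^{-1}[\Delta,\varphi_{\rm GP}(x)](-\Delta)^{-1}$ are Hilbert--Schmidt with controlled norm by the Kato--Seiler--Simon inequality \cite[Theorem 4.1]{Sim-05}, yielding $O(1)$ errors when multiplied against the localized factor $NV_Nf_N$, whose mass is $8\pi\ao$. The same strategy, with one extra factor of $H^{-1}$, controls $\Tr(H^{-2}K^2)$.

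Finally, using $-2\Delta\omega_N=V_Nf_N$ (equation \eqref{eq:scatering-eq-N-omega}), the two Hartree-type integrals combine: the $(2f_N-f_N^2)V_N=V_Nf_N+V_Nf_N(1-f_N)$ term from \eqref{eq:HN>=first-bound-intro} minus the $V_Nf_N(1-f_N)$ piece coming from the Bogoliubov correction leaves $\tfrac{N^2}{2}\int(V_Nf_N\ast\varphi_{\rm GP}^2)\varphi_{\rm GP}^2$, which since $\int V_Nf_N=8\pi\ao/N$ and $\varphi_{\rm GP}\in H^2$ equals $4\pi\ao\int\varphi_{\rm GP}^4+O(1)$ (the error is controlled by $\|\varphi_{\rm GP}\|_{H^2}$ and the decay \eqref{eq:bounds_on_w}). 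Adding the Dirichlet term recovers exactly $Ne_{\rm GP}$ via \eqref{eq:eGP}, and the surviving $(\mu-\mu_1)\cN_+$ provides the coercive lower bound $C^{-1}\sum_i Q_{x_i}$ after passing from $\cF(QL^2)$ back to the $N$-particle space.
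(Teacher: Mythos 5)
Your overall architecture matches the paper's: complete the square with $(\1-P\otimes Pf_N)V_N(\1-f_NP\otimes P)\ge 0$, reduce to a quadratic Hamiltonian, apply the sharp $-\tfrac14\Tr(H^{-1}K^2)$ bound of Theorem \ref{thm:general-bound-HBog}, and evaluate the trace via the scattering equation and Kato--Seiler--Simon commutator estimates. However, there is one genuine gap: your claim that ``the mixed $P$--$Q$ cubic terms vanish thanks to the symmetrization built into the square'' is false in the inhomogeneous setting. Expanding the square produces the terms $P\otimes P f_NV_N\,P\otimes Q+P\otimes P f_NV_N\,Q\otimes P+{\rm h.c.}$, which in second quantization become $\mathcal{H}_1=a^*(Q(-\Delta+\Vext)\varphi_{\rm GP})a_0+a^*\bigl(Q((f_NV_N)*\varphi_{\rm GP}^2)\varphi_{\rm GP}\bigr)a_0^*a_0a_0+{\rm h.c.}$ Each of these is a priori of order $\sqrt{N}$ (the vector $Q(-\Delta+\Vext)\varphi_{\rm GP}$ has $O(1)$ norm and $a_0\sim\sqrt{N}$; the convolution vector has norm $O(N^{-1})$ but is multiplied by $a_0^*a_0a_0\sim N^{3/2}$), so dropping them would leave an uncontrolled error of order $N^{1/2}$, destroying the $O(1)$ bound. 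They only vanish in the homogeneous case by momentum conservation. The essential missing idea is that the two leading cubic contributions \emph{cancel each other} via the Gross--Pitaevskii equation \eqref{eq:GP-equation}: one combines them into $a^*\bigl(Q(-\Delta+\Vext+(Nf_NV_N)*\varphi_{\rm GP}^2)\varphi_{\rm GP}\bigr)a_0$, uses $Q(-\Delta+\Vext+8\pi\ao\varphi_{\rm GP}^2)\varphi_{\rm GP}=\mu\, Q\varphi_{\rm GP}=0$, and then estimates $\|(Nf_NV_N)*\varphi_{\rm GP}^2-8\pi\ao\varphi_{\rm GP}^2\|_{L^\infty}=O(N^{-1})$, after which Cauchy--Schwarz reduces the remainder to $O(1)+O(\ao)\cN_+$. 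This is precisely why the GP minimizer (and not an arbitrary normalized $\varphi$) must be used as the condensate mode.

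A secondary, less serious omission: you pass from the $N$-body quadratic part to $\inf{\rm Spec}(\mathbb{H}_{\rm Bog})$ on $\cF(QL^2)$ without addressing how to remove the factors $N^{-1}a_0^*a_0^*$ and $N^{-1}a_0a_0$ multiplying the pairing term. The paper does this by defining $\alpha$ through $\langle\overline{g_1},\alpha g_2\rangle=N^{-1}\langle a^*(g_2)a^*(g_1)a_0a_0\rangle_\Gamma$, verifying that $(\gamma,\alpha)$ satisfies the compatibility condition \eqref{eq:1-pdm-quasi} (this requires the operator inequality $\pm(a^*(g_1)a^*(g_2)a_0a_0+{\rm h.c.})\le Na^*(g_1)a(g_1)+Na(g_2)a^*(g_2)$), and invoking the representability of such pairs by quasi-free states. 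Also, the fluctuation bound \eqref{eq:fluc-N} plays no role in the lower bound (the state here is an arbitrary $N$-body state, not quasi-free); the $PP$--$PP$ term is handled by the elementary operator inequality $(N-\cN_+)(N-\cN_+-1)\ge N^2-N-CN\cN_+$.
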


\subsection{Reduction to quadratic Hamiltonian} Denote  
\begin{align}
	\mu_1 &:= \int_{\R^3} \left( |\nabla \varphi_{\rm GP} |^2 + \Vext |\varphi_{\rm GP}|^2 \right)+ 32\pi \ao \norm{\varphi_{\rm GP}}_{L^\infty}^2, \label{eq:def-mu1}\\
	\mu_2 &:=  \inf_{\substack{u\bot \varphi_{\rm GP} \\  \norm{u}_{L^2}=1}}   \int_{\R^3} \left( |\nabla  u |^2 + \Vext |u|^2 \right) - 8\pi \ao \norm{\varphi_{\rm GP}}_{L^\infty}^2. \label{eq:def-mu2}
\end{align}
Note that  $\mu_1<\mu_2$ thanks to \eqref{eq:gap-condition}. Our starting point is 

\begin{lemma} \label{lem:red-qua} Let $\mu_1<\mu<\mu_2$. Then under the notations in \eqref{eq:H-K-intro} we have
\begin{align} \label{eq:HN>=first-bound}
	H_N &\ge N  \int_{\mathbb{R}^{3}} \left(|\nabla \varphi_{\rm GP} |^2 + \Vext |\varphi_{\rm GP}|^2 \right)  + \frac{N^2}{2} \int_{\R^3} \left ((((2f_N-f_N^2)V_N)*\varphi_{\rm GP}^2)\varphi_{\rm GP}^2 \right)  \nn\\
	& \quad +  (\mu- \mu_1) \cN_+ +  \inf {\rm Spec} (\mathbb{H}_{\rm Bog})- C.
\end{align}
\end{lemma}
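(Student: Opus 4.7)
The plan is to implement the ``completing-the-square'' strategy on the $N$-body Hilbert space. First I would pass to the excitation Fock space via the Lewin--Nam--Serfaty--Solovej unitary $U_N\colon L^2(\R^3)^{\otimes_s N}\to\cF_+^{\leq N}:=\bigoplus_{n=0}^N (QL^2(\R^3))^{\otimes_s n}$, under which condensate creation and annihilation operators are replaced by explicit functions of $N-\cN_+$. The one-body part $\dGamma(-\Delta+V_{\rm ext})$ then splits into the condensate energy $\langle\varphi_{\rm GP},(-\Delta+V_{\rm ext})\varphi_{\rm GP}\rangle(N-\cN_+)$, the excitation piece $\dGamma(Q(-\Delta+V_{\rm ext})Q)=\dGamma(H)+\mu\cN_+$ with $H$ as in \eqref{eq:H-K-intro}, and an off-diagonal contribution $a^*(Q(-\Delta+V_{\rm ext})\varphi_{\rm GP})a(\varphi_{\rm GP})+\mathrm{h.c.}$. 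By the Gross--Pitaevskii equation \eqref{eq:GP-equation} one has $Q(-\Delta+V_{\rm ext})\varphi_{\rm GP}=-8\pi\ao\,Q\varphi_{\rm GP}^3$, so this off-diagonal term equals $-8\pi\ao(a^*(Q\varphi_{\rm GP}^3)a(\varphi_{\rm GP})+\mathrm{h.c.})$, an $O(\sqrt N)$ linear-in-excitation operator that will cancel an identical contribution coming from the interaction below.

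For the interaction I would apply the operator inequality $(\1-P\otimes P\,f_N)V_N(\1-f_N\,P\otimes P)\geq 0$, which rearranges to
\[
V_N\geq V_Nf_N(P\otimes P)+(P\otimes P)f_NV_N-(P\otimes P)f_NV_Nf_N(P\otimes P).
\]
Second-quantizing the right-hand side and splitting every $a_x^*=\varphi_{\rm GP}(x)a^*(\varphi_{\rm GP})+\tilde a_x^*$ sorts the contribution by the number of excitation operators $\tilde a^{\#}$; only the $0$-, $1$- and $2$-excitation sectors appear because at least two slots of each summand are projected onto the condensate. The $0$-excitation sector, using $V_Nf_N+f_NV_N-f_NV_Nf_N=(2f_N-f_N^2)V_N$ (valid since $V_N,f_N$ commute as multiplications) together with conjugation by $U_N$, produces
\[
\tfrac12(N-\cN_+)(N-\cN_+-1)\iint(2f_N-f_N^2)V_N(x-y)\varphi_{\rm GP}^2(x)\varphi_{\rm GP}^2(y)\,dx\,dy,
\]
which I would replace by its $N^2$ counterpart at the cost of $O(\ao\|\varphi_{\rm GP}\|_\infty^2)\cN_++O(1)$ via the scattering identity $\int V_Nf_N=8\pi\ao/N$ from \eqref{eq:scatering-eq-N}. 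The $2$-excitation sector assembles, modulo an absorbable prefactor $\sqrt{(N-\cN_+)(N-\cN_+-1)}/N=1-O(\cN_+/N)$, into the pairing part $\tfrac12\iint K(x,y)(\tilde a_x^*\tilde a_y^*+\tilde a_x\tilde a_y)\,dx\,dy$ of $\mathbb{H}_{\mathrm{Bog}}$, which combines with the $\dGamma(H)$ extracted above to form the full $\mathbb{H}_{\mathrm{Bog}}$.

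The main obstacle is the $1$-excitation sector. At leading order it equals $+8\pi\ao\sqrt N(a^*(Q\varphi_{\rm GP}^3)+\mathrm{h.c.})$, reflecting the $\delta$-approximation $NV_Nf_N\to 8\pi\ao\delta_0$ applied against $\varphi_{\rm GP}^2$; this cancels exactly against the $-8\pi\ao\sqrt N(a^*(Q\varphi_{\rm GP}^3)+\mathrm{h.c.})$ coming from the off-diagonal kinetic contribution identified in the first paragraph, and this cancellation is the crucial role of the GP equation. The subleading remainder is controlled by Cauchy--Schwarz with weight $\eps\cN_+$, producing $\eps\cN_++O(\eps^{-1}\ao^2\|\varphi_{\rm GP}\|_\infty^2)$. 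Tuning $\eps$ to be of order $\ao\|\varphi_{\rm GP}\|_\infty^2$ --- which, combined with the $N^2$-replacement cost and the prefactor correction, is the origin of the coefficient $32\pi\ao\|\varphi_{\rm GP}\|_\infty^2$ in the definition \eqref{eq:def-mu1} of $\mu_1$ --- leaves a residual $\cN_+$-coefficient of $\mu-\mu_1$, positive by \eqref{eq:gap-condition} combined with $\mu<\mu_2$. Collecting the $0$-excitation contribution with the leading condensate kinetic $\langle\varphi_{\rm GP},(-\Delta+V_{\rm ext})\varphi_{\rm GP}\rangle N$ gives the first two terms of \eqref{eq:HN>=first-bound}; combining $\dGamma(H)$ with the pairing gives $\inf\,\mathrm{Spec}(\mathbb{H}_{\mathrm{Bog}})$; the $(\mu-\mu_1)\cN_+$ and the $O(1)$ residual complete the bound.
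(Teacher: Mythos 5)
Your overall skeleton (completing the square with $(\1-P\otimes P f_N)V_N(\1-f_N P\otimes P)\ge 0$, cancellation of the linear-in-excitation terms via the Gross--Pitaevskii equation, and reduction to $\mathbb{H}_{\rm Bog}$) matches the paper's, and your accounting of the $0$- and $1$-excitation sectors is essentially the paper's analysis of $\mathcal H_0$ and $\mathcal H_1$. The gap is in the $2$-excitation sector. After conjugating by $U_N$, the pairing term carries the weight $N^{-1}\sqrt{(N-\cN_+)(N-\cN_+-1)}$, and you dismiss this as an ``absorbable prefactor''. For a \emph{lower} bound on general $N$-particle states this is not justified: writing $\theta=N^{-1}\sqrt{(N-\cN_+)(N-\cN_+-1)}-1=O(\cN_+/N)$, any Cauchy--Schwarz estimate of $\iint K(x,y)\,\theta\,(a_x^*a_y^*+\mathrm{h.c.})$ produces either a term of order $\cN_+^2$ (which on the $N$-body space is only bounded by $N\cN_+$, far exceeding the available $(\mu-\mu_1)\cN_+$) or a term of the form $\delta N^{-1}\iint V_N(x-y)\,a_x^*a_y^*a_xa_y$, and after completing the square there is no residual positive quartic interaction left to absorb the latter. (This is exactly why the analogous estimate \eqref{eq:wGN-5} is used only for the \emph{upper} bound, where it is tested against a quasi-free state with $\langle\cN^2\rangle=O(1)$.) Without an a priori bound on $\cN_+$ --- which the paper deliberately avoids using --- this step fails as stated.

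The paper circumvents this entirely by never conjugating with $U_N$ in the lower bound. It keeps the quadratic part in the form $\mathcal H_2$ with the exact weights $N^{-1}a_0^*a_0^*$ and $N^{-1}a_0a_0$, and shows that for \emph{any} $N$-body state $\Gamma$ the pair $(\gamma,\alpha)$ defined by $\langle g_1,\gamma g_2\rangle=\langle a^*(g_2)a(g_1)\rangle_\Gamma$ and $\langle\overline{g_1},\alpha g_2\rangle=N^{-1}\langle a^*(g_2)a^*(g_1)a_0a_0\rangle_\Gamma$ satisfies the representability conditions \eqref{eq:1-pdm-quasi}, the key point being $\pm\left(a^*(g_1)a^*(g_2)a_0a_0+\mathrm{h.c.}\right)\le N a^*(g_1)a(g_1)+N a(g_2)a^*(g_2)$. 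Hence $\langle\mathcal H_2-\mu\cN_+\rangle_\Gamma=\Tr(\mathbb{H}_{\rm Bog}\Gamma')$ for some quasi-free state $\Gamma'$ on $\cF(\gH_+)$, which yields $\ge\inf{\rm Spec}(\mathbb{H}_{\rm Bog})$ with no error term at all. You would need to either adopt this device or supply a genuine control of the prefactor error (for instance by reserving a small fraction of the positive interaction before completing the square and retracking all constants), which is additional nontrivial work.
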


\begin{proof} We write $\varphi=\varphi_{\rm GP}$ for short.  Denote $P= |\varphi \rangle \langle \varphi|=\1-Q$. Let $f_N$ be the scattering solution of $V_N(x)=N^2V(Nx)$ as in \eqref{eq:scatering-eq-N}. 
 Expanding the operator inequality \eqref{eq:key-1-PPfV1-fPP}, we obtain 
\begin{align} \label{eq:VN>=inhomogeneous}
	V_N &\ge P\otimes P \left(2f_N- f_N^2\right) V_N P \otimes P + \left(P\otimes P f_N V_N Q \otimes Q + h. c. \right) \nn \\
	&\quad + \left( P\otimes P f_N V_N P\otimes Q + P\otimes P f_N V_N Q\otimes P + h. c. \right).
\end{align}
Let $\{\varphi_n\}_{n=0}^\infty$ be an orthonormal basis for $L^2(\R^3)$ with $\varphi_0=\varphi$ and denote  $a_n:=a(\varphi_n)$. 
From \eqref{eq:VN>=inhomogeneous} we have the operator inequality in $L^2(\R^3)^{\otimes_s N}$:
\begin{equation} \label{eq:HN>H0H1H2}
	H_N\ge \mathcal{H}_0 + \mathcal{H}_1 + \mathcal{H}_2
\end{equation}
where
\begin{align*}
	\mathcal{H}_0 & = \int_{\mathbb{R}^{3}} (|\nabla \varphi |^2 + \Vext \varphi^2)  + \frac{1}{2} \int_{\mathbb{R}^{3}} (((2f_N-f_N^2)V_N)*\varphi^2)\varphi^2 a_0^* a_0^* a_0 a_0,\\   	
	\mathcal{H}_1 & = a^*(Q(-\Delta + \Vext) \varphi) a_0 +  a^*(Q( (f_N V_N) \ast\varphi^2) \varphi)) a_0^* a_0 a_0  + {\rm h.c.},\\
	\mathcal{H}_2 & = \frac{1}{2}\sum_{m,n\ge 1} \left( \langle \varphi_m, (-\Delta + \Vext) \varphi_n \rangle a_m^* a_n + N^{-1}\langle \varphi_m\otimes \varphi_n, K \rangle a_m^* a_n^* a_0 a_0 + {\rm h.c.}\right). 
\end{align*}

\subsubsection*{Analysis of $\mathcal{H}_0$.} Using \eqref{eq:HN>=HBog-3} again we have 
\begin{align*}
	\mathcal{H}_0 & =\begin{multlined}[t]
		\int_{\mathbb{R}^{3}} (|\nabla \varphi |^2 + \Vext \varphi^2) (N-\cN_+)\\
		+ \frac{1}{2} \int_{\mathbb{R}^{3}} (((2f_N-f_N^2)V_N)*\varphi^2)\varphi^2  (N-\cN_+)(N-\cN_+-1)
	\end{multlined}\\
	&\ge\begin{multlined}[t]
		N  \int_{\mathbb{R}^{3}} \left(|\nabla \varphi |^2 + \Vext \varphi^2 \right)  + \frac{N^2-N}{2} \int_{\R^3} \left ((((2f_N-f_N^2)V_N)*\varphi^2)\varphi^2 \right) \\
		- \left(  \int_{\mathbb{R}^{3}} (|\nabla \varphi |^2 + \Vext \varphi^2) + N \int_{\R^3} \left ((((2f_N-f_N^2)V_N)*\varphi^2)\varphi^2 \right) \right) \cN_+.
	\end{multlined}
\end{align*}
Then using 
\[
0\le \int_{\mathbb{R}^{3}} N(((2f_N-f_N^2)V_N)*\varphi^2)\varphi^2 \le 2N \norm{f_NV_N}_{L^1} \norm{\varphi}_{L^4}^4 \le 16 \pi \ao \norm{\varphi}_{L^\infty}^2
\]
and the definition of $\mu_1$ in \eqref{eq:def-mu1}, we obtain 
\begin{align} \label{eq:H0>=}
\mathcal{H}_0 &\ge N  \int_{\mathbb{R}^{3}} \left(|\nabla \varphi |^2 + \Vext \varphi^2 \right)  + \frac{N^2}{2} \int_{\R^3} \left ((((2f_N-f_N^2)V_N)*\varphi^2)\varphi^2 \right) \nn \\
& \quad + (16\pi \ao \norm{\varphi}_{L^\infty}^2 -\mu_1 ) \cN_+  -  C. 
\end{align}

\subsubsection*{Analysis of $\mathcal{H}_1$.} We have
\begin{align} \label{eq:cH1-0}
	\mathcal{H}_1 & = a^*(Q(-\Delta + \Vext) \varphi) a_0 +  a^*(Q( (f_N V_N) \ast\varphi^2) \varphi)) (N-\cN_+) a_0  + {\rm h.c.} \nn\\
	&= a^* (Q (-\Delta + \Vext+ (N f_N V_N) \ast\varphi^2 )\varphi) a_0 -  a^*(Q( (f_N V_N) \ast\varphi^2) \varphi)) \cN_+ a_0 + {\rm h.c.}\nn\\
	&= a^* (Q ((N f_N V_N) \ast\varphi^2 - 8\pi \ao \varphi^2 )\varphi) a_0  - a^*(Q( (f_N V_N) \ast\varphi^2) \varphi)) \cN_+ a_0 + {\rm h.c.}
\end{align}
Here  in the last equality we have used the Gross--Pitaevskii equation \eqref{eq:GP-equation}. For the first term on the right side of \eqref{eq:cH1-0}, denoting 
\[ g=Q ((N f_N V_N) \ast\varphi^2 - 8\pi \ao \varphi^2 )\varphi \]
we have 
\begin{align*}
	\norm{g}_{L^2} & \le  \norm{(N f_N V_N) \ast\varphi^2 - 8\pi \ao \varphi^2}_{L^\infty} \\
	&\le \norm{N (f_N V_N*\varphi^2)^{\wedge}  - 8\pi \ao \widehat{\varphi^2}}_{L^1}= \int_{\R^3} | \widehat {fV}(p/N) - \widehat {fV}(0) | |\widehat{\varphi^2}(p)| \dd p \\
	&\le \frac{\|\nabla_p  \widehat {fV}\|_{L^\infty}}{N} \int_{\R^3}  |p| |\widehat{\varphi^2}(p)|  \dd p \le \frac{1}{N}  \norm{|x| fV}_{L^1} \norm{\varphi^2}_{H^{1/2}}\le \frac{C}{N}.
\end{align*}
Therefore, by  the Cauchy--Schwarz inequality
\[
	\pm ( a^*(g) a_0 + a_0^* a(g)) \le N a^*(g) a(g) + N^{-1} a_0^* a_0  \le N \norm{g}_{L^2}^2 \cN_+ + 1 \le C.
\]
For the second term on the right side of \eqref{eq:cH1-0}, we use 
\[
	\norm{Q ((f_N V_N) \ast\varphi^2)\varphi}_{L^2} \le \norm{(f_N V_N) \ast \varphi^2}_{L^\infty} \le \norm{f_N V_N}_{L^1} \norm{\varphi^2}_{L^\infty} \le \frac{8\pi \ao \norm{\varphi}_{L^\infty}^2 }{N},
\]
and the Cauchy--Schwarz inequality
\begin{multline} \label{eq:cH1-0-second}
	\pm \left( a^*(Q( (f_N V_N) \ast\varphi^2) \varphi)) \cN_+ a_0 + {\rm h.c.} \right)\\
	\begin{aligned}[b]
 		&\le\eps^{-1} a^*(Q( (f_N V_N) \ast\varphi^2) \varphi)) a(Q( (f_N V_N) \ast\varphi^2) \varphi)) + \eps a_0^* \cN_+^2 a_0\\
		&\le \eps^{-1} \norm{Q(f_N V_N) \ast\varphi^2) \varphi}_{L^2}^2 \cN_+  + \eps N \cN_+^2\\
		&\le \eps^{-1} \left( \frac{8\pi \ao \norm{\varphi}_{L^\infty}^2 }{N} \right)^2 \cN_+  + \eps N^2 \cN_+.
	\end{aligned}
\end{multline}
Optimizing over $\eps>0$ we can replace the right side of \eqref{eq:cH1-0-second} by $16\pi \ao \norm{\varphi}_{L^\infty}^2  \cN_+$. Thus 
\begin{equation} \label{eq:H1>=}
	\mathcal{H}_1 \ge - 16\pi \ao \norm{\varphi}_{L^\infty}^2  \cN_+ - C. 
\end{equation}

\subsubsection*{Analysis of $\mathcal{H}_2$} We will prove that 
\begin{equation} \label{eq:H2>=} 
	\mathcal{H}_2 -\mu \cN_+ \ge \inf {\rm Spec} (\mathbb{H}_{\rm Bog}). 
\end{equation}
The main difficulty in \eqref{eq:H2>=} is to remove the factors $N^{-1}a^*_0a^*_0$ and $N^{-1}a_0a_0$ in $\mathcal{H}_2$.

Recall the operators $H,K$ defined in \eqref{eq:H-K-intro}. For any (mixed) state $\Gamma$ on $L^2(\R^3)^{\otimes_s N}$, we can write
 \begin{equation*}
	\braket{\mathcal{H}_2-\mu \cN_+}_{\Gamma} = \tr (H \gamma) + \Re \tr (K\alpha)
\end{equation*}
where $K: \gH_+^* \to \gH_+$ is the operator with kernel $K(x,y)$ and $\gamma : \gH_+ \to  \gH_+$, $\alpha :  \gH_+ \to  \gH_+^* \equiv \overline{\gH_+}$ are operators defined by 
\begin{equation*}
	\braket{g_1,\gamma g_2} = \braket{ a^*(g_2) a(g_1)}_\Gamma, \quad \braket{\overline{g_1}, \alpha g_2} = N^{-1}\braket{ a^*(g_2) a^*(g_1) a_0a_0}_\Gamma, \quad \forall g_1,g_2\in \gH_+.
\end{equation*}
Then we have $\gamma\ge 0$, $\Tr \gamma= \langle \cN_+\rangle_{\Gamma}<\infty$ and $\alpha^*=\overline{\alpha}$. Moreover, for all $g_1,g_2\in \gH_+$, by the Cauchy--Schwarz inequality, we have 
\begin{align*}
	\pm(a^*(g_1)a^*(g_2) a_0 a_0 + {\rm h.c.}) &\le a^*(g_1) a_0 (a^*(g_1)a_0)^*+  (a^*(g_2)a_0)^* a^*(g_2)a_0 \\
	&= a^* (g_1) a(g_1)  (N-\cN_+ +1) +  a(g_2) a^*(g_2) (N-\cN_+)\\
	&\le N a^* (g_1)a(g_1) + N a(g_2) a^*(g_2). 
\end{align*}
Here we have used $a^*(g_1)a(g_1)(\cN_+-1)\ge 0$. Consequently, for all $g_1,g_2\in \gH_+$, 
\begin{multline*}
	\left \langle
	\begin{pmatrix}
		g_1 \\
		\overline{g_2} 
	\end{pmatrix},
	\begin{pmatrix}
		\gamma &  \alpha^* \\
		\alpha & 1 + \overline{\gamma} 
	\end{pmatrix}
	\begin{pmatrix}
		g_1 \\
		\overline{g_2}
	\end{pmatrix}
	\right\rangle_{\gH_+\oplus \gH_+^*} = \langle g_1, \gamma g_1 \rangle + \langle g_2, (1+\gamma) g_2\rangle + \langle  \overline{g_2}, \alpha {g_1}  \rangle + \langle g_1, \alpha^* \overline{g_2}\rangle \\
	=\left\langle a^*(g_1)a(g_1) + a(g_2) a^*(g_2) + N^{-1} a^*(g_2) a^*(g_1) a_0 a_0 + N^{-1} a_0^* a_0^* a(g_2)a(g_1)  \right\rangle_{\Gamma} \ge 0. 
\end{multline*}
Thus $(\gamma,\alpha)$ satisfies the conditions in \eqref{eq:1-pdm-quasi}. Hence, there exists a mixed quasi-free state $\Gamma'$ on Fock space $\cF(\gH_+)$ such that $(\gamma,\alpha)$ are its one-body density matrices. Therefore, 
\[
	\braket{\mathcal{H}_2-\mu \cN_+}_{\Gamma} = \tr (H \gamma) + \Re \tr (K\alpha) = \tr \left(  \mathbb{H}_{\mathrm{Bog}} \Gamma' \right) \ge \inf {\rm Spec} (\mathbb{H}_{\rm Bog}).
\]
Thus \eqref{eq:H2>=} holds true. 

\subsubsection*{Conclusion} Inserting \eqref{eq:H0>=}, \eqref{eq:H1>=} and \eqref{eq:H2>=} in \eqref{eq:HN>H0H1H2} we obtain \eqref{eq:HN>=first-bound}.
\end{proof}

\subsection{A general bound for quadratic Hamiltonians} Now we prove a general lower bound on quadratic Hamiltonians on Fock space, which is of independent interest.   

\begin{theorem} [Lower bound for quadratic Hamiltonians] \label{thm:general-bound-HBog} Let $\gK$ be a closed subspace of $L^2(\R^d)$. Let $K$ be a self-adjoint bounded operator on $\gK$ with real-valued symmetric kernel $K(x,y)=K(y,x)$. Let $H>0$ be a self-adjoint operator on $\gK$ such that $H$ has compact resolvent with real-valued eigenfunctions, $H^{-1/2}K$ is a Hilbert-Schmidt operator, and $H \ge (1+\eps) \|K\|_{\rm op}$ for a constant $\eps>0$.  Then 
\begin{equation} \label{eq:quad-Hamil-lwb-general}
	\dGamma(H)  + \frac{1}{2} \iint K(x,y) (a_x^* a_y^* + a_x a_y) \dd x \dd y  \ge - \frac{1}{4} \tr\left(H^{-1} K^2 \right)  - C_\eps \norm{K}_{\rm op} \Tr(H^{-2} K^{2})
\end{equation}
on the Fock space $\cF(\gK)$. Here the constant $C_\eps>0$ depends only on $\eps$. 
\end{theorem}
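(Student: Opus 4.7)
The plan is a single ``completing the square'' argument, the operator-valued analogue of what the authors do in the homogeneous case. Suppose one can find self-adjoint operators $X, Y$ on $\gK$ with real-valued symmetric kernels satisfying
\[
X^2 + Y^2 = H, \qquad XY + YX = K.
\]
Setting $B_x := \int [X(x,y)\, a_y + Y(x,y)\, a_y^*]\, dy$, a direct computation using only the CCR yields
\[
\int B_x^* B_x\, dx = \dGamma(X^2+Y^2) + \iint \frac{XY+YX}{2}(x,y)(a_x^*a_y^* + a_xa_y)\, dx\, dy + \Tr(Y^2) = \mathbb{H}_{\rm Bog} + \Tr(Y^2).
\]
The left-hand side equals $\sum_n B(u_n)^* B(u_n)$ for any orthonormal basis $\{u_n\}$ of $\gK$, and hence is a non-negative quadratic form, so one immediately obtains $\mathbb{H}_{\rm Bog} \ge -\Tr(Y^2)$. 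The problem is then reduced to the scalar bound $\Tr(Y^2) \le \tfrac{1}{4}\Tr(H^{-1}K^2) + C_\eps \|K\|_{\rm op}\Tr(H^{-2}K^2)$.

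A solution of the two identities above, valid without any commutativity assumption on $H$ and $K$, is
\[
X := \tfrac{1}{2}\big[(H+K)^{1/2} + (H-K)^{1/2}\big], \qquad Y := \tfrac{1}{2}\big[(H+K)^{1/2} - (H-K)^{1/2}\big],
\]
as one checks by direct expansion of $X^2+Y^2$ and $XY+YX$. The spectral gap $H \ge (1+\eps)\|K\|_{\rm op}$ guarantees $H \pm K \ge \tfrac{\eps}{1+\eps} H > 0$, so the square roots exist and are self-adjoint; combined with the real-eigenfunction assumption on $H$, both $X$ and $Y$ have real symmetric kernels.

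For the trace estimate, I would start from $\Tr(Y^2) = \tfrac{1}{2}\Tr H - \tfrac{1}{2}\Tr[(H+K)^{1/2}(H-K)^{1/2}]$ (the right-hand trace is real by cyclicity), and use the integral representation $\sqrt{Z} = \pi^{-1}\int_0^\infty Z(Z+s)^{-1} s^{-1/2}\, ds$ to write $(H\pm K)^{1/2} = H^{1/2} + R_\pm$ with
\[
R_\pm = \pm \frac{1}{\pi}\int_0^\infty s^{1/2}(H+s)^{-1}K(H\pm K + s)^{-1}\, ds.
\]
Decomposing $\Tr(Y^2) = -\tfrac{1}{2}\Tr(H^{1/2}(R_+ + R_-)) - \tfrac{1}{2}\Tr(R_+R_-)$ and performing one Neumann step $(H\pm K + s)^{-1} = (H+s)^{-1} \mp (H+s)^{-1}K(H\pm K+s)^{-1}$ in each resolvent, the scalar identity $\pi^{-1}\int_0^\infty s^{1/2}(h+s)^{-3}\, ds = (8h^{3/2})^{-1}$ together with cyclicity of trace reproduces the main term $\tfrac{1}{4}\Tr(H^{-1}K^2)$ exactly. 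The cubic-in-$K$ remainder is then estimated by $C_\eps \|K\|_{\rm op}\Tr(H^{-2}K^2)$ via the uniform resolvent bound $(H\pm K+s)^{-1} \le \tfrac{1+\eps}{\eps}(H+s)^{-1}$ (a consequence of the spectral gap).

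The main technical obstacle is the non-commutativity of $H$ and $K$: the square roots $(H \pm K)^{1/2}$ do not expand polynomially in $K$, and all the resolvent integrals involve non-commuting operator products. What saves the clean constant $1/4$ is the fact that the leading-order quantities are traces, where cyclicity effectively reduces the computation to the commutative case; the non-commuting corrections only contribute to the remainder term involving $\|K\|_{\rm op} \Tr(H^{-2}K^2)$. A minor preliminary is to reduce to finite rank by a spectral cutoff of $H$ (harmless thanks to compactness of the resolvent and the Hilbert--Schmidt hypothesis on $H^{-1/2}K$), which circumvents any domain subtleties; one also verifies along the way that $Y$ is Hilbert--Schmidt via its integral representation, so that $\Tr(Y^2)<\infty$.
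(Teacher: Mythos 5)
Your sum-of-squares route is genuinely different from the paper's proof and is, in outline, viable. The paper first invokes the Grech--Seiringer diagonalization to identify the ground state energy as $\tfrac12\Tr(E-H)$ with $E=(D^{1/2}(D+2K)D^{1/2})^{1/2}$, $D=H-K$, and then expands $E-D$ via the representation $x=\tfrac{2}{\pi}\int_0^\infty x^2(x^2+t^2)^{-1}\dd t$; you instead avoid the diagonalization theory entirely by exhibiting explicit self-adjoint $X,Y$ with $X^2+Y^2=H$, $XY+YX=K$ (your choice $X=\tfrac12[(H+K)^{1/2}+(H-K)^{1/2}]$, $Y=\tfrac12[(H+K)^{1/2}-(H-K)^{1/2}]$ does satisfy both identities, and $H\pm K\ge\tfrac{\eps}{1+\eps}H>0$ makes the square roots legitimate), obtaining $\mathbb{H}_{\rm Bog}\ge-\Tr(Y^2)$ directly from $\int B_x^*B_x\,\dd x\ge0$. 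Since $-\Tr(Y^2)=-\tfrac12\Tr H+\tfrac12\Tr[(H+K)^{1/2}(H-K)^{1/2}]\le\tfrac12\Tr(E-H)$ (the latter trace is $\Tr|(H+K)^{1/2}(H-K)^{1/2}|$), your bound is consistent with, and slightly weaker than, the exact energy --- which is all that is needed. This buys self-containedness at the operator level; the analytic work is then concentrated entirely in the scalar estimate on $\Tr(Y^2)$, which is of comparable difficulty to the paper's treatment of its terms (I)--(III).

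There is, however, one concrete flaw in your description of the main term, and it matters for the constant $1/4$. Cyclicity of the trace does \emph{not} reduce the quadratic-in-$K$ contribution to the commutative case: writing $H=\sum_j h_j|e_j\rangle\langle e_j|$ and carrying out the $s$- (and $t$-) integrals, the leading term of $\Tr(Y^2)$ is
\[
\sum_{i,j}\frac{|\langle e_i,Ke_j\rangle|^2}{(\sqrt{h_i}+\sqrt{h_j})^2},
\]
not $\tfrac14\Tr(H^{-1}K^2)=\tfrac18\sum_{i,j}|\langle e_i,Ke_j\rangle|^2(h_i^{-1}+h_j^{-1})$. The discrepancy between these two expressions is quadratic in $K$, so --- contrary to your assertion --- the non-commuting correction to the main term cannot be absorbed into the $C_\eps\|K\|_{\rm op}\Tr(H^{-2}K^2)$ remainder. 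The argument is saved only by the elementary inequality $(\sqrt a+\sqrt b)^{-2}\le\tfrac18(a^{-1}+b^{-1})$, which holds since $(a+b)^2+2\sqrt{ab}(a+b)\ge 8ab$; this is the exact analogue of the step $\tfrac{2y}{(x+y)^2}\le\tfrac1{2x}$ that the paper uses for its term (II), and it must be invoked explicitly. Two further technical points you should not gloss over: the separate traces $\Tr H$ and $\Tr[(H+K)^{1/2}(H-K)^{1/2}]$ are infinite, so one must work with $\tfrac14\Tr((R_+-R_-)^2)$ or in the finite-rank reduction throughout; and naive term-by-term Schatten bounds on the integral representations of $R_\pm$ diverge logarithmically at large $s$ (even when $H$ and $K$ commute), so the $s$-integrals must be performed before norms are taken --- the finite-rank cutoff you propose handles this only if the resulting estimates are uniform in the cutoff, which requires an argument along the lines of the paper's final subsection.
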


\begin{remark} The constant $-1/4$ is optimal. In fact, if $H$ and $K$ commute, then the ground state energy of the quadratic Hamiltonian is
\[
 \frac{1}{2} \Tr \left(\sqrt{H^2-K^2}-H\right) =  - \frac{1}{2} \Tr \left( \frac{K^2}{H+ \sqrt{H^2-K^2}}\right)
\]
 which is close to $-(1/4) \Tr(H^{-1}K^2)$ when $H$ is significantly bigger than $K$. 
\end{remark}

\begin{proof}[Proof of Theorem ~\ref{thm:general-bound-HBog}] First, let us assume that $K$ is trace class. Then following the analysis of Grech--Seiringer \cite[Section 4]{GS-13}, we see that the ground state energy of the quadratic Hamiltonian in \eqref{eq:quad-Hamil-lwb-general} is $\frac{1}{2}\Tr(E-H)$ where
\[
	E:=(D^{1/2}(D+2K) D^{1/2})^{1/2}, \quad D:=H-K \ge 0.  
\]
Using the formula 
$$x= \frac{2}{\pi} \int_{0}^\infty \frac{x^2}{x^2+t^2} \mathrm{d} t, \quad \forall x\ge 0,$$
and the resolvent identity, we can rewrite 
\begin{align}  \label{eq:deal-I-II-III}
E-D &= \frac{2}{\pi} \int_{0}^\infty  \left(\frac{1}{D^2 + t^2} - \frac{1}{E^2+t^2}\right)t^2 \dd t = \frac{2}{\pi} \int_{0}^\infty \frac{1}{D^2 + t^2} D^{1/2} (2K) D^{1/2} \frac{1}{E^2+t^2} t^2 \dd t  \nn\\
	&=\begin{multlined}[t]
		\frac{2}{\pi} \int_{0}^\infty \frac{1}{D^2 + t^2} D^{1/2} (2K) D^{1/2} \frac{1}{D^2+t^2} t^{2} \dd t\\	
		- \frac{2}{\pi} \int_{0}^\infty \left(\frac{1}{D^2 + t^2} D^{1/2} (2K) D^{1/2}   \right)^2 \frac{1}{D^2+t^2} t^{2} \dd t\\
		+\frac{2}{\pi} \int_{0}^\infty  \left( \frac{1}{D^2 + t^2} D^{1/2} (2K) D^{1/2} \right)^3  \frac{1}{E^2+t^2} t^{2} \dd t
	\end{multlined}\nn\\
	&=: (\mathrm{I}) + (\mathrm{II}) + (\mathrm{III}).
\end{align}

\subsubsection*{Dealing with $(\mathrm{I})$.} Using the cyclicity of the trace and 
\begin{equation*}
	\frac{2}{\pi}\int_{0}^\infty \frac{x t^{2}}{(x^2+t^2)^2} \dd t = \frac{1}{2}, \quad \forall x>0,
\end{equation*}
we have  
\begin{equation} \label{eq:deal-I}
	\Tr{\rm (I)}=     \frac{2}{\pi} \Tr  \int_{0}^\infty \frac{Dt^{2}}{(D^2 + t^2)^2}  (2K)  \dd t = \Tr(K).
\end{equation}
\subsubsection*{Dealing with $(\mathrm{II})$.} Note that $D=H-K>0$ has compact resolvent (since $H$ has compact resolvent and $K$ is compact). Therefore, we can write 
\[ D=\sum_{j} D_j |\varphi_j\rangle \langle \varphi_j| \]
with positive eigenvalues $(D_j)$ and an orthonormal basis of eigenvectors $(\varphi_j)$. Therefore, 
\begin{align*}
	\tr (\mathrm{II}) &=  - \frac{8}{\pi} \Tr  \int_0^\infty \frac{D}{(D^2+t^2)^2} K \frac{D}{D^2+t^2} K t^2 \dd t  \\
	&= - \frac{8}{\pi} \Tr  \int_0^\infty \sum_{i,j} \frac{D_j}{(D_j^2+t^2)^2} |\varphi_j \rangle \langle \varphi_j| K \frac{D_i}{D_i^2+t^2} |\varphi_i\rangle \langle \varphi_i| K t^2 \dd t \\
	&=  - \frac{8}{\pi} \sum_{i,j} |\langle \varphi_i, K \varphi_j\rangle|^2  \int_0^\infty  \frac{D_iD_j }{(D_j^2+t^2)^2(D_i^2+t^2)} t^2 \dd t.  
\end{align*}
Using 
\[
	\frac{8}{\pi} \int_0^\infty \frac{xy }{(x^2+t^2)^2 (y^2+t^2)} t^2 \dd t = \frac{2 y}{(x+y)^2} \le \frac{1}{2x}, \quad \forall x,y>0
\]
we find that
\begin{equation}  \label{eq:deal-II}
	\tr (\mathrm{II}) \ge  - \frac{1}{2} \sum_{ij} |\langle \varphi_i, K \varphi_j\rangle|^2 D_j^{-1} = -\frac{1}{2} \tr \left(K D^{-1} K\right). 
\end{equation}

\subsubsection*{Dealing with $(\mathrm{III})$.} By H\"older's inequality for Schatten norm \cite[Theorem 2.8]{Sim-05},
\begin{align*}
	\Tr{\rm (III)} &= \frac{16}{\pi} \left|  \int_0^\infty \Tr \left( \left( \frac{D}{D^2+t^2} K \right)^3 D^{1/2}  \frac{t^2}{E^2+t^2} D^{-1/2} \right) \dd t \right|\\
	&\le  \frac{16}{\pi} \int_0^\infty   \norm{\frac{D}{D^2+t^2} K}_{\gS^3}^3  \norm{D^{1/2}  \frac{t^2}{E^2+t^2} D^{-1/2}}_{\rm op} \dd t.
\end{align*}
Then by the Araki--Lieb--Thirring inequality \cite{LT-76,Ara-90}, 
\begin{align*}
	\norm{ \frac{D}{D^2+t^2} K}_{\gS^3}^3 &= \Tr \left( \left( K \left( \frac{D}{D^2+t^2} \right)^2 K  \right)^{3/2} \right) = \Tr \left( \left( \frac{D}{D^2+t^2} K^2   \frac{D}{D^2+t^2} \right)^{3/2} \right) \\
	&\le \Tr \left( \left( \frac{D}{D^2+t^2} \right)^{3/2} |K|^3 \left( \frac{D}{D^2+t^2} \right)^{3/2} \right) = \Tr  \left( \frac{D^{3}}{(D^2+t^2)^3} |K|^{3}   \right).
\end{align*}
Here we have used the fact that $|A|=\sqrt{A^*A}$ and $\sqrt{AA^*}$ have the same non-zero eigenvalues (with multiplicity). On the other hand, using $H\ge (1+\eps) \|K\|_{\rm op}$  we find that 
\[ D+2K = H+ K \ge C_\eps^{-1} (H-K)= C_\eps^{-1} D \]
for any large constant $C_\eps$ satisfying $(C_\eps+1)/(C_\eps-1) \le 1+\eps$. Hence, 
\[
E^2 = D^{1/2} (D+2K)D^{1/2} \ge C_\eps^{-1} D^2. 
\]
Reversely, we also have $D+2K \le C_\eps D$, and hence
\[
E^2 = D^{1/2} (D+2K)D^{1/2} \le C_\eps D^2. 
\]
Since the mapping $0\le A\mapsto \sqrt{A}$ is operator monotone, we deduce that $D^{1/2}E^{-1/2}$ and $E^{1/2}D^{-1/2}$ are bounded operators. 
Therefore, 
\begin{align*}
	\norm{D^{1/2} \frac{t^2 }{E^2 +t^2} D^{-1/2}}_{\rm op} &= \norm{D^{1/2} E^{-1/2} \frac{t^2 }{E^2 +t^2} E^{1/2} D^{-1/2}}_{\rm op} \\
	&\le \norm{D^{1/2} E^{-1/2}}_{\rm op}  \norm{\frac{t^2}{E^2+t^2}}_{\rm op} \norm{E^{1/2}D^{-1/2}}_{\rm op} \le C_\eps.
\end{align*}
We conclude that 
\begin{equation}  \label{eq:deal-III}
	|\Tr{\rm (III)}| \le C_\eps  \int_0^\infty \Tr  \left(  \frac{D^3}{(D^2+t^2)^3} |K|^{3} \right) \dd t \le C_\eps \Tr( D^{-2}|K|^3). 
\end{equation}
In the last estimate we have used the identity 
\[
x^{-2}= \frac{16}{3\pi} \int_0^\infty \frac{x^3}{(x^2+t^2)^3} \dd t, \quad \forall x>0. 
\]

\subsubsection*{Conclusion in the trace class case} Inserting \eqref{eq:deal-I}, \eqref{eq:deal-II}, \eqref{eq:deal-III} in \eqref{eq:deal-I-II-III} we find that 
\[
\Tr(E-H) = \Tr(E-D)-\Tr(K) \ge - \frac{1}{2} \Tr(D^{-1}K^2) - C_\eps \Tr( D^{-2}|K|^3).
\]
Let us replace $D=H-K$ by $H$ on the right side. Using $H\ge (1+\eps) \|K\|_{\rm op}$ and the Cauchy--Schwarz inequality we have 
\begin{align*}
	D^2 &= (H-K)^2 = H^2 + K^2 -  HK - KH  \ge (1-\eta) H^2 - (\eta^{-1}-1)K^2 \\
	&\ge (1-\eta) H^2 - (\eta^{-1}-1) (1+\eps)^{-2} H^2 \ge (C_\eps)^{-1} H^2.
\end{align*}
Here the constant $0<\eta<1$ is chosen sufficiently close to $1$ (depending on $\eps$).  Therefore, 
\[
\Tr\left( D^{-2}|K|^3\right) \le C_\eps \Tr\left( H^{-2}|K|^3\right) \le C_\eps \norm{K}_{\rm op}\Tr\left( H^{-2}K^2\right). 
\]
By the resolvent identity and H\"older's inequality for Schatten norm \cite[Theorem 2.8]{Sim-05}, 
\begin{align*}
	\left| \Tr\left(\left(D^{-1}-H^{-1}\right)K^2\right)\right|  &=  \left| \Tr\left(D^{-1} K H^{-1}K^2\right) \right| \\
	 &\le \norm{D^{-1} H}_{\rm op} \norm{H^{-1}K}_{\gS^2}^2  \norm{K}_{\rm op} \le C_\eps \norm{K}_{\rm op}\Tr\left(H^{-2}K^2\right)
\end{align*}
where $\norm{\cdot}_\gS^2$ is the Hilbert--Schmidt norm. Thus  \eqref{eq:quad-Hamil-lwb-general} holds true:
\begin{align*}
	\Tr(E-H)  &\ge - \frac{1}{2} \Tr\left(D^{-1}K^2\right) - C_\eps \Tr\left(D^{-2}|K|^3\right)\\
		&\ge - \frac{1}{2} \Tr\left(H^{-1}K^2\right) - C_\eps \norm{K}_{\rm op}\Tr\left(H^{-2}K^2\right).
\end{align*}

\subsubsection*{Removing the trace class condition} Finally, let us remove the trace class condition on $K$. Recall that $H$ has compact resolvent. For every $n\in \mathbb{N}$ we introduce the spectral projection
$$
P_n = \1(H\le n), \quad Q_n = \1(H>n)
$$
and decompose
$$
K=K_n^{(1)}+K_n^{(2)}, \quad K_{n}^{(1)}= \frac{1}{2}(P_n K + K P_n),\quad K_n^{(2)}= \frac{1}{2}(Q_n K + K Q_n). 
$$

Note that $K_n^{(1)}$ is a self-adjoint finite-rank operator because $P_n$ is finite-rank. Moreover, $K_n^{(1)}$ has a real-valued symmetric kernel because $K$ has the same property and $H$ has real-valued eigenvalues. By the triangle inequality, 
\begin{align*}
\|K_n^{(1)}\|_{\rm op} &\le \frac{1}{2} ( \|P_n K\|_{\rm op} + \|K P_n\|_{\rm op} ) \le \|K\|_{\rm op}, \\
\| H^{-s} K_{n}^{(1)}\|_{\gS^2} &\le \frac{1}{2}( \| H^{-s} P_n K\|_{\gS^2} + \|H^{-s} K P_n\|_{\gS^2}) \le  \| H^{-s} K\|_{\gS^2}, \quad \forall s\ge 1/2.
\end{align*}
Take $\eta\in (0,\eps/2)$. Using $H\ge (1+\eps)\|K\|_{\rm op}$ we find that
\[
\left(1-\eta \right) H \ge \left(1-\frac{\eps}{2}\right) (1+\eps) \norm{K}_{\rm op} \ge  \left(1 + \frac{\eps(1-\eps)}{2}\right) \| K_n^{(1)}\|_{\rm op}.
\]
Applying \eqref{eq:quad-Hamil-lwb-general}  in the trace class case with $(H,K)$ replaced by $((1-\eta) H, K_n^{(1)})$ we get
\begin{align} \label{eq:1-eta-H-K}
	\left(1-\eta \right) &\dGamma(H) + \frac{1}{2} \iint K_1^{(n)}(x,y) (a_x^* a_y^* + a_x a_y) \dd x \dd y \nn\\
		&\ge - \frac{1}{4(1-\eta)}  \| H^{-1/2} K_{n}^{(1)}\|_{\gS^2}^2   - C_\eps \norm{K_1^{(n)}}_{\rm op} \| H^{-1} K_{n}^{(1)}\|_{\gS^2}^2 \nn\\
		&\ge  - \frac{1}{4(1-\eta)} \| H^{-1/2} K\|_{\gS^2}^2   - C_\eps \norm{K}_{\rm op} \| H^{-1} K\|_{\gS^2}^2, \quad \forall n\ge 1.
\end{align}

Next, note that  $K_n^{(2)}$ is also self-adjoint and has a real-valued symmetric kernel. Moreover, $Q_n \to 0$ strongly as $n\to \infty$  (namely $\|Q_n u\|\to 0$ for all $u\in \gK$) since $H$ has compact resolvent. Since $H^{-1/2}K$ is Hilbert-Schmidt, we deduce that 
$$
\| H^{-1/2} K_{n}^{(2)}\|_{\gS^2} \le \frac{1}{2}( \| H^{-1/2} Q_n K\|_{\gS^2} + \|H^{-1/2} K Q_n\|_{\gS^2})  \to 0 \quad \text{ as }n\to \infty. 
$$
Recall that from \cite[Theorem 2]{NNS-16}, the bound \eqref{eq:Bog-half-lower-bound} holds true if $\| H^{-1/2} K H^{-1/2}\|_{\rm op}<1$ and $\| H^{-1/2} K\|_{\gS^2}<\infty$. Using \eqref{eq:Bog-half-lower-bound} with $(H,K)$ replaced by $(\eta H, K_n^{(2)})$, we find that 
\begin{equation} \label{eq:eta-H-K}
	\eta \dGamma(H) + \frac{1}{2} \iint K_n^{(2)}(x,y) (a_x^* a_y^* + a_x a_y) \dd x \dd y \ge - \frac{1}{2\eta} \| H^{-1/2} K_n^{(2)}\|_{\gS^2} \to 0	\end{equation}
	as  $n\to \infty$. Putting \eqref{eq:1-eta-H-K} and \eqref{eq:eta-H-K} together, we find that
$$
	\dGamma(H) + \frac{1}{2} \iint K(x,y) (a_x^* a_y^* + a_x a_y) \dd x \dd y  \ge  - \frac{1}{4(1-\eta)} \| H^{-1/2} K\|_{\gS^2}^2   - C_\eps \norm{K}_{\rm op} \| H^{-1} K\|_{\gS^2}^2.
$$
Taking $\eta\to 0$ we obtain \eqref{eq:quad-Hamil-lwb-general}. This completes the proof of Theorem~\ref{thm:general-bound-HBog}. 
\end{proof}

\subsection{Explicit lower bound for \texorpdfstring{$\mathbb{H}_{\rm Bog}$}{HBog}}

Now we apply Theorem \ref{thm:general-bound-HBog} to compute an explicit lower bound for the quadratic Hamiltonian $\mathbb{H}_{\rm Bog}$ in Lemma \ref{lem:red-qua}.

\begin{lemma} [Lower bound for $\mathbb{H}_{\rm Bog}$] 
	\label{lem:missing_term}
For  $\mathbb{H}_{\rm Bog}$ in Lemma \ref{lem:red-qua} we have
\begin{align} \label{eq:HBog>=fNVN}
 \inf {\rm Spec}(\mathbb{H}_{\rm Bog}) \ge -\frac{N^2}{2}  \int_{\mathbb{R}^{3}} (V_Nf_N(1-f_N) \ast \varphi_{\rm GP}^2)  \varphi_{\rm GP}^2  -C. 
\end{align}
\end{lemma}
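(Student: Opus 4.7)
My plan is to apply the general lower bound \eqref{eq:quad-Hamil-lwb-general} from Theorem~\ref{thm:general-bound-HBog} with $\gK = QL^2(\R^3)$, and then to evaluate the two traces on its right-hand side explicitly up to $O(1)$ errors.

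First I would verify the hypotheses. The operator $H = Q(-\Delta + V_{\rm ext} - \mu)Q$ has compact resolvent (since $V_{\rm ext} \to \infty$) with real-valued eigenfunctions, and the kernel $K(x,y)$ is real and symmetric by construction. The spectral gap $H \ge (1+\eps) \|K\|_{\rm op}$ follows from $\mu_1 < \mu < \mu_2$ combined with \eqref{eq:gap-condition}, since on $QL^2(\R^3)$ we have $H \ge \mu_2 - \mu + 8\pi\ao \|\varphi_{\rm GP}\|_{L^\infty}^2$ while $\|K\|_{\rm op} \le \|\widetilde K\|_{\rm op} \le N\|V_N f_N\|_{L^1} \|\varphi_{\rm GP}\|_{L^\infty}^2 \le 8\pi\ao \|\varphi_{\rm GP}\|_{L^\infty}^2$, leaving a fixed slack $\eps > 0$. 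Theorem~\ref{thm:general-bound-HBog} then gives
\[
\inf \mathrm{Spec}(\mathbb{H}_{\rm Bog}) \ge -\tfrac14 \Tr(H^{-1} K^2) - C \|K\|_{\rm op}\,\Tr(H^{-2} K^2).
\]

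The heart of the argument is to establish
\[
\Tr(H^{-1}K^2) \le 2N^2 \int_{\R^3} \bigl(V_N f_N (1-f_N) \ast \varphi_{\rm GP}^2\bigr)\,\varphi_{\rm GP}^2 + O(1),
\]
which, combined with the prefactor $-1/4$, produces the main term in \eqref{eq:HBog>=fNVN}. To this end I would replace, modulo $O(1)$ errors, the operator $H$ by $-\Delta$ and the operator $K$ by the unprojected kernel $\widetilde K(x,y) = \varphi_{\rm GP}(x)\varphi_{\rm GP}(y)(NV_N f_N)(x-y)$, as in the heuristic calculation from the introduction. Viewing $\widetilde K$ as the sandwich $\varphi_{\rm GP}(x)\,(NV_N f_N)(x-y)\,\varphi_{\rm GP}(y)$ and exploiting the scattering identity $\widehat{V_N f_N}(p) = 2|p|^2\, \widehat{(1-f_N)}(p)$ to cancel a factor of $|p|^2$ against the free resolvent, one formally obtains
\[
\Tr\bigl((-\Delta)^{-1} \widetilde K^2\bigr) = 2N^2 \int_{\R^3} \bigl(V_N f_N (1-f_N) \ast \varphi_{\rm GP}^2\bigr)\,\varphi_{\rm GP}^2,
\]
which is precisely the target quantity. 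The passage from $H$ to $-\Delta$ and from $K$ to $\widetilde K$ would be carried out through the resolvent identity together with the Kato--Seiler--Simon inequality \cite[Theorem 4.1]{Sim-05}, used to control in Hilbert--Schmidt norm the commutators of $\varphi_{\rm GP}$ with resolvents as well as the projection remainders $(\1 - Q\otimes Q)\widetilde K$; the $H^2$-regularity of $\varphi_{\rm GP}$ from Section~\ref{sec:pre} and the uniform estimate $\|NV_N f_N\|_{L^1} \le 8\pi\ao$ will render all these remainders $O(1)$ in $N$.

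A parallel but simpler analysis gives $\Tr(H^{-2}K^2) = \|H^{-1}K\|_{\gS^2}^2 = O(1)$ uniformly in $N$, and combined with $\|K\|_{\rm op} = O(1)$ this shows that the second error term contributes only a constant. The principal difficulty I foresee is the rigorous comparison between $H$ and $-\Delta$: one has to commute the trapping potential $V_{\rm ext}$ and the multiplication operator $\varphi_{\rm GP}$ through resolvents and through the spectral projection $Q$, while keeping track of Hilbert--Schmidt and trace-class remainders, and it is precisely here that the assumption $|\nabla V_{\rm ext}|^2 \le 2V_{\rm ext}^3 + C$ (via $(-\Delta + V_{\rm ext})^2 \ge \Delta^2 - C$) together with the $L^\infty$-smoothness of $\varphi_{\rm GP}$ enters decisively.
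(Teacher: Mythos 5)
Your proposal follows essentially the same route as the paper's proof: verify the hypotheses of Theorem~\ref{thm:general-bound-HBog} using the gap $\mu_1<\mu<\mu_2$ and $\|K\|_{\rm op}\le 8\pi\ao\|\varphi_{\rm GP}\|_{L^\infty}^2$, then evaluate $\Tr(H^{-1}K^2)$ by comparing $H$ with the free Laplacian (the paper uses $(1-\Delta)^{-1}$ rather than $(-\Delta)^{-1}$ and splits off the $O(1)$ difference), replacing $K$ by $\widetilde K$, invoking the scattering identity $\widehat{V_Nf_N}(p)=2|p|^2\widehat{(1-f_N)}(p)$ for the main term, and controlling all commutators and projection remainders via the Kato--Seiler--Simon inequality, with $\Tr(H^{-2}K^2)=O(1)$ handled the same way. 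The plan is correct and the tools you identify for the deferred technical steps are exactly those used in the paper.
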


\begin{proof} We will write $\varphi=\varphi_{\rm GP}$ for short. Recall the notations $H,K$ in \eqref{eq:H-K-intro}.

\subsubsection*{Lower bound by Theorem \ref{thm:general-bound-HBog}}
Since $\varphi\ge 0$, the kernel $K(x,y)$ is symmetric and real-valued. Thus the operator $K$ is symmetric. It is bounded with 
$\|K\|_{\rm op}\le 8 \pi \ao \norm{\varphi}_{L^\infty}^2$ because for all $g_1,g_2\in \gH_+$ we have
\begin{align}
	\left|\langle g_1,K g_2 \rangle \right| &= \left| \iint \overline{g_1(x)} \varphi(x) (Nf_NV_N(x-y)) \varphi(y) g_2(y) \dd x \dd y \right| \nn \\
	&\le \norm{\varphi}_{L^\infty}^2 \norm{g_1}_{L^2} \norm{g_2}_{L^2} \norm{Nf_NV_N}_{L^1}= 8 \pi \ao \norm{\varphi}_{L^\infty}^2 \norm{g_1}_{L^2} \norm{g_2}_{L^2}. \label{eq:K-op}
\end{align}

On the other hand, since $\mu_2>\mu$ we have  
\[ H=Q(-\Delta+V_{\rm ext}-\mu)Q \ge \mu_2-\mu + 8\pi \ao \norm{\varphi}_{L^\infty}^2  \ge (1+\eps) \norm{K}_{\rm op}\quad \text{ on } \gH_+ \]
for a small constant $\eps>0$ independent of $N$. Moreover, $H$ has compact resolvent since $V_{\rm ext}(x)\to \infty$ as $|x|\to \infty$. Thus we can apply Theorem \ref{thm:general-bound-HBog} and obtain 
\begin{equation} \label{eq:lwb-appl}
	\inf {\rm Spec}(\mathbb{H}_{\rm Bog}) \ge -\frac{1}{4} \Tr_{\gH_+}\left(H^{-1}K^2\right) - C \Tr_{\gH_+}\left(H^{-2}K^2\right). 
\end{equation}

\subsubsection*{Replacing $H$ and $K$ by $1-\Delta$ and $\widetilde K$} We can interpret $ H = Q(-\Delta+V_{\rm ext}-\mu) Q$ as an operator on $L^2(\R^3)$. Then using $V_{\rm ext} \ge 0$ and $Q=1-|\varphi\rangle \langle \varphi|$ we have 
\begin{align} \label{eq:comp-1a}
	H \ge Q(-\Delta) Q -\mu &=-\Delta + |\varphi\rangle \langle \Delta \varphi| + |\Delta \varphi\rangle \langle  \varphi| + \norm{\nabla \varphi}_{L^2(\R^3)}^2 |\varphi\rangle \langle  \varphi| -\mu \nn\\
	&\ge -\Delta - 2\norm{\Delta \varphi}_{L^2} -\mu  \quad \text{ on } L^2(\R^3).
\end{align}
Moreover, by  \eqref{eq:H^2>Delta^2} and the Cauchy-Schwarz inequality, 
\begin{align} \label{eq:comp-2a}
	H^2 &\ge Q(-\Delta+V_{\rm ext}-\mu)^2 Q - \norm{Q(-\Delta+V_{\rm ext}-\mu) \varphi}_{L^2(\R^3)}^2 \nn\\
	&\ge \frac{1}{2} Q(\Delta)^2 Q - C = \frac{1}{2}(\Delta)^2 - \frac{1}{2} \big(|\varphi\rangle \langle \Delta \varphi| \Delta +  \Delta |\Delta\varphi\rangle \langle  \varphi| \big)  +\frac{1}{2} \norm{\Delta \varphi}^2_{L^2} - C \nn\\
	&\ge \frac{1}{4}(\Delta)^2 - C \quad \text{ on } L^2(\R^3).
\end{align}
Since $H_{|\gH_+}$ is strictly positive, we also have, for any large constant  $C_0>0$, 
\begin{equation} \label{eq:comp-2b}
	C_0^2 H^2 \ge H^2+C_0 \quad \text{ on }\gH_+.
\end{equation}

Recall that the mapping $t\mapsto t^{-1}$ is operator monotone for $t>0$. Moreover, if $A$ is a self-adjoint positive operator on $L^2(\R^3)$ that commutes with $Q$, then 
\begin{equation} \label{QAQ-A}
	Q (QAQ)_{|\gH_+}^{-1} Q = Q A^{-1}Q\quad \text{ on } L^2(\R^3)
\end{equation}
by Spectral Theorem. Therefore, 
\begin{align} \label{eq:comp-2c}
	Q H_{|\gH_+}^{-2} Q &\le C_0^2 Q (H^2+C_0)_{|\gH_+}^{-1} Q  \nn\\
	&\le C Q (H^2+C_0)^{-1} Q \le C Q(1-\Delta)^{-2} Q  \quad \text{ on } L^2(\R^3).
\end{align}
Similarly, 
\begin{align} \label{eq:comp-1}
	QH_{|\gH_+}^{-1}Q &= Q (H+C)_{|\gH_+}^{-1}Q + C Q (H(H+C))_{|\gH_+}^{-1} Q \nn\\
	&\le Q ( H+C)^{-1} Q + C Q H_{|\gH_+}^{-2} Q \nn\\
	&\le Q (1-\Delta)^{-1}Q + C Q(1-\Delta)^{-2} Q \quad \text{ on } L^2(\R^3).
\end{align}

Next we replace $K$ by $\widetilde K$. Following \eqref{eq:K-op} we have $\|\widetilde K\|_{\rm op} \le C$ where $K$ is the operator on $L^2(\R^3)$ with kernel 
$\widetilde K(x,y)$. Using $K^2\le Q\widetilde K^2 Q$ on $\gH_+$ and \eqref{eq:comp-2c} we can estimate
\begin{align} 
	\Tr_{\gH_+}\left(H^{-2}K^2\right) &\le \Tr_{\gH_+}\left(H^{-2}Q\widetilde K^2Q\right)= \Tr_{L^2(\R^3)}\left( QH^{-2}_{|\gH_+}Q \widetilde K^2\right) \nn\\ 
	&\le \begin{multlined}[t]
		C \Tr\left(Q (1-\Delta)^{-2}Q \widetilde K^2\right) = C \Tr\left( (1-\Delta)^{-2}Q \widetilde K^2 Q\right)\\
		= C\Tr\left( (1-\Delta)^{-2} \left( \widetilde K^2 - |\varphi \rangle \langle \widetilde K^2 \varphi| - |\widetilde K^2\varphi \rangle \langle   \varphi| + \|\widetilde K \varphi\|_{L^2(\R^3)}^2 \right)\right)
	\end{multlined}\nn\\
	&\le C \Tr\left( (1-\Delta)^{-2} \widetilde K^2 \right) + C. \label{eq:comp-2}
\end{align}
Similarly, from \eqref{eq:comp-1a} we deduce that
\begin{align} 
	\Tr_{\gH_+}\left(H^{-1}K^2\right) &\le \Tr_{\gH_+}\left(QH^{-1}Q\widetilde K^2\right) \le \Tr\left( \left( Q (1-\Delta)^{-1}Q + C Q(1-\Delta)^{-2} Q \right) \widetilde K^2\right)\nn\\
	&\le \Tr\left((1-\Delta)^{-1} \widetilde K^2\right) + C \Tr\left( (1-\Delta)^{-2} \widetilde K^2 \right) + C. \label{eq:comp-1}
\end{align}
Thus \eqref{eq:lwb-appl} reduces to 
\begin{equation} \label{eq:lwb-appl-aaa}
	\inf {\rm Spec}(\mathbb{H}_{\rm Bog}) \ge -\frac{1}{4}\Tr\left((1-\Delta)^{-1} \widetilde K^2\right) -C \Tr\left( (1-\Delta)^{-2} \widetilde K^2 \right) - C. 
\end{equation}

\subsubsection*{Evaluation of traces in \eqref{eq:lwb-appl-aaa}} Note that the operator $\widetilde{K}$ with kernel $\varphi(x) N{f_NV_N}(x-y) \varphi(y)$ can be written as   
\begin{equation} \label{eq:K-phi-g}
	\widetilde{K}= \varphi(x) N\widehat{f_NV_N}(p) \varphi(x) \quad \text{ on }L^2(\R^3)
\end{equation}
where $\varphi(x)$ and $v(p)$ are the multiplication operators on the position and momentum spaces (the derivation of \eqref{eq:K-phi-g} uses $\widehat{u*v}=\widehat u \widehat v$). Recall the Kato--Seiler--Simon inequality on Schatten norms   \cite[Theorem 4.1]{Sim-05}:
\begin{align}\label{eq:KSS}
	\norm{u(x) v(p)}_{\gS^r} \le C_{d,r}\norm{u}_{L^r(\R^d)}\norm{v}_{L^r(\R^d)}, \quad 2\le r <\infty. 
\end{align}
Consequently, 
\begin{align} \label{eq:lwb-appl-1}
	\tr\Big( (1-\Delta)^{-2} &\widetilde K^2\Big)		= \tr \left( \varphi(x) N\widehat{(f_NV_N)}(p) \varphi(x) \frac{1}{(1+p^2)^2}  \varphi(x) N\widehat{(f_NV_N)} (p) \varphi(x) \right) \nonumber \\
		&\le  \norm{\varphi}_{L^\infty(\mathbb{R}^{3})}^2 \|N\widehat{f_N V_N}\|_{L^\infty(\mathbb{R}^{3})}^2  \norm{\varphi(x) (p^2 + 1)^{-1}}^2_{\gS^2} \nonumber \\
		&\le C \norm{\varphi}_{L^\infty(\mathbb{R}^{3})}^2 \|N\widehat{f_N V_N}\|_{L^\infty(\mathbb{R}^{3})}^2 \norm{\varphi}_{L^2(\R^3)}^2 \norm{(p^2 + 1)^{-1}}_{L^2(\R^3)}^2\le C.
\end{align}
Here we have used H\"older's inequality for Schatten spaces and $\|N\widehat{f_N V_N}\|_{L^\infty} \le  8\pi \ao$.

Next, consider  
\[
\Tr\left((1-\Delta)^{-1}\widetilde K^2\right) = \tr\left(\varphi(x) N\widehat{(f_NV_N)} (p) \varphi(x) \frac{1}{1+p^2} \varphi(x) N\widehat{(f_NV_N)} (p) \varphi(x) \right). 
\]
Let us decompose 
\begin{multline*}
	2 \varphi(x) (1+p^2)^{-1} \varphi(x) -  \varphi^2(x)  (1+p^2)^{-1}  - (1+p^2)^{-1}  \varphi^2(x) \\
	\begin{aligned}[t]
		&= - \left[ \varphi(x), \left[ \varphi(x), (1+p^2)^{-1}\right]\right] = \left[ \varphi(x), (1+p^2)^{-1} [\varphi(x),p^2] (1+p^2)^{-1}\right] \\
		&=\begin{multlined}[t]
			\left[ \varphi(x), (1+p^2)^{-1}\right] [\varphi(x),p^2] (1+p^2)^{-1} + (1+p^2)^{-1} \left[ \varphi(x),  [\varphi(x),p^2]\right] (1+p^2)^{-1} \\
			\qquad+ (1+p^2)^{-1} [\varphi(x),p^2] \left[ \varphi(x),  (1+p^2)^{-1}\right]
		\end{multlined}\\
		&=\begin{multlined}[t]
			-2 (1+p^2)^{-1} [\varphi(x),p^2] (1+p^2)^{-1} [\varphi(x),p^2] (1+p^2)^{-1} \\
			- 2 (1+p^2)^{-1} |\nabla \varphi(x)|^2 (1+p^2)^{-1},
		\end{multlined}
	\end{aligned}
\end{multline*}
where we have used 
\[ [\varphi(x),(p^2+1)^{-1}] = -  (p^2+1)^{-1} [\varphi(x),p^2](p^2+1)^{-1} \]
and the IMS formula \eqref{eq:IMS-formula} for $[\varphi(x), [\varphi(x),p^2]]$. This gives 
\begin{multline}\label{eq:1-2-3-second}
	\varphi(x) N\widehat{f_NV_N} (p)  \varphi(x) \frac{1}{1+p^2} \varphi(x) N\widehat{f_NV_N}(p) \varphi(x) \\
	\begin{aligned}[b]
		&= \begin{multlined}[t]
			\frac{1}{2} \varphi(x) N\widehat{f_NV_N} (p)  \left( \varphi^2(x)  \frac{1}{1+p^2}  +   \frac{1}{1+p^2}  \varphi^2(x) \right)  N\widehat{f_NV_N}(p) \varphi(x)\\
				- \varphi(x) N\widehat{f_NV_N} (p) \frac{1}{1+p^2} [\varphi(x),p^2] \frac{1}{1+p^2} [\varphi(x),p^2] \frac{1}{1+p^2} N\widehat{f_NV_N}(p) \varphi(x)\\
				- \varphi(x) N\widehat{f_NV_N} (p) \frac{1}{1+p^2} |\nabla \varphi(x)|^2 \frac{1}{1+p^2} N\widehat{f_NV_N}(p) \varphi(x)
		\end{multlined}\\
		&=: (\mathrm{I}) + (\mathrm{II}) + (\mathrm{III}).
	\end{aligned}
\end{multline}

\subsubsection*{Dealing with $(\mathrm{I})$.}  For the main term (I), we write  
\begin{align*}
\tr (\mathrm{I}) 
	&= \Re  \Tr \left( \varphi^2(x) N\widehat{f_NV_N} (p) \varphi^2(x)  \frac{N\widehat{f_NV_N}(p)}{1+p^2}   \right)\\
	&= \begin{multlined}[t]
		\Re  \Tr \left( \varphi^2(x) N\widehat{f_NV_N} (p) \varphi^2(x)  \frac{N\widehat{f_NV_N}(p)}{p^2}  \right) \\
		- \Re  \Tr \left(  N\widehat{f_NV_N} (p) \varphi^2(x)  \frac{N\widehat{f_NV_N}(p) }{p^2(1+p^2)}  \varphi^2(x) \right). 
	\end{multlined}
\end{align*}
The first term can be computed exactly using the scattering equation \eqref{eq:scatering-eq-N-omega}
\begin{multline*}
\Re  \Tr \left( \varphi^2(x) N\widehat{f_NV_N} (p) \varphi^2(x)  \frac{1}{p^2}  N\widehat{f_NV_N}(p) \right)\\
= 2 \Re \Tr \left( \varphi^2(x) N\widehat{f_NV_N} (p) \varphi^2(x)  N\widehat {(1-f_N)}(p) \right) = 2  N^2  \int_{\mathbb{R}^{3}} (V_Nf_N(1-f_N) \ast \varphi^2)  \varphi^2. 
\end{multline*}
Here we have used the following identity 
\begin{align}\label{eq:tr-int}
	\Tr|\left(\overline{\varphi_1(x)} \overline{ \widehat{g}_1(p)} \varphi_2(x) \widehat g_2(p) \right) &= \left\langle \widehat{g}_1(p) \varphi_1(x), \varphi_2(x) \widehat g_2(p) \right\rangle_{\gS^2}\nn\\
	&= \left\langle (\widehat{g}_1(p) \varphi_1(x)) (\cdot,\cdot), (\varphi_2(x) \widehat g_2(p)) (\cdot, \cdot) \right\rangle_{L^2(\R^3\times \R^3)} \nn\\
	&= \iint \overline{g_1(y-z) \varphi_1(z)} \varphi_2(y) g_2(y-z) \dd y \dd z.
\end{align}
This is based on the equality between the Hilbert--Schmidt norm of operators and the $L^2$-norm of operator kernels (the kernel of operator $\widehat{g}_1(p) \varphi_1(x)$ is $g_1(y-z)\varphi(z)$, similarly to \eqref{eq:K-phi-g}). Here in our case $\varphi(x)\ge 0$ and $\widehat{f_NV_N}(p)$ is real-valued since $V_Nf_N$ is radial.

The second term can be estimated by H\"older's and the  Kato--Seiler--Simon inequalities:
\begin{align*}
	&\left| \Tr \left(  N\widehat{f_NV_N} (p) \varphi^2(x)  \frac{N\widehat{f_NV_N}(p)}{p^2(1+p^2)}   \varphi^2(x) \right) \right| \le \norm{N\widehat{f_NV_N}}_{L^\infty}^2 \norm{\varphi^2(x) \sqrt{\frac{1}{p^2(1+p^2)}}}_{\gS^2}^2 \\
	 &\qquad\qquad\qquad\qquad\qquad \qquad\qquad \quad  \le C\norm{N\widehat{f_NV_N}}_{L^\infty}^2  \norm{\varphi^2}_{L^2}^2 \norm{\sqrt{\frac{1}{p^2(1+p^2)}}}_{L^2}^2 \\
	&\qquad\qquad\qquad\qquad\qquad \qquad\qquad \quad \le C\norm{N\widehat{f_NV_N}}_{L^\infty}^2 \norm{\varphi}_{L^4}^4 \norm{\frac{1}{p^2(1+p^2)}}_{L^1} \le C.   
\end{align*}
Here we used again  $\| N \widehat{f_NV_N} \|_{L^\infty}  \le 8\pi \ao$. Thus
\begin{align*}
	\tr (\mathrm{I}) = 2 N^2  \int_{\mathbb{R}^{3}} (V_Nf_N(1-f_N) \ast \varphi^2)  \varphi^2 + O(1). 
\end{align*}

\subsubsection*{Dealing with $(\mathrm{II})$.}  By expanding further
\[ [\varphi(x),p^2] =  (\Delta \varphi)(x) +   2 (\nabla \varphi(x)) \cdot \nabla \]
and using the triangle inequality we have 
\begin{align*}
	& |\Tr(\mathrm{II})| = \left| \Tr \left( \varphi(x) \frac{N\widehat{f_NV_N} (p) }{1+p^2} [\varphi(x),p^2] \frac{1}{1+p^2} [\varphi(x),p^2] \frac{1}{1+p^2} N\widehat{f_NV_N}(p) \varphi(x) \right) \right| \\
	&\begin{multlined}[t]
		\le\left| \Tr \left( \varphi(x) \frac{N\widehat{f_NV_N} (p) }{1+p^2} (\Delta \varphi(x)) \frac{1}{1+p^2} (\Delta \varphi(x)) \frac{1}{1+p^2} N\widehat{f_NV_N}(p) \varphi(x) \right) \right| \\
		+ 4 \left| \Tr \left( \varphi(x) \frac{N\widehat{f_NV_N} (p) }{1+p^2} (\Delta \varphi(x)) \frac{1}{1+p^2} ((\nabla \varphi(x)) \cdot \nabla) \frac{1}{1+p^2} N\widehat{f_NV_N}(p) \varphi(x) \right) \right|\\
		+ 4\left| \Tr \left( \varphi(x) \frac{N\widehat{f_NV_N} (p) }{1+p^2} ((\nabla \varphi(x)) \cdot \nabla) \frac{1}{1+p^2} ((\nabla \varphi(x)) \cdot \nabla) \frac{1}{1+p^2} N\widehat{f_NV_N}(p) \varphi(x) \right) \right|.
	\end{multlined}
\end{align*}
Then by H\"older's and the Kato--Seiler--Simon inequalities,
\begingroup
\allowdisplaybreaks
\begin{align*}
 	|\Tr(\mathrm{II})| & \le\norm{\varphi}_{L^\infty}^2 \norm{N \widehat{f_NV_N}}_{L^\infty}^2 \norm{(\Delta \varphi(x)) \frac{1}{1+p^2}}_{\rm \gS^2}^2\\
		&\quad + 4 \norm{\varphi}_{L^\infty}^2  \norm{N \widehat{f_NV_N}}_{L^\infty}^2 \norm{\frac{1}{1+p^2} (\Delta \varphi(x))}_{\rm \gS^2} \norm{\frac{1}{1+p^2}  |\nabla \varphi(x)|}_{\rm \gS^2} \norm{ |\nabla| \frac{1}{1+p^2}}_{\rm op} \\
			&\qquad +4  \norm{ \varphi(x) \frac{1}{\sqrt{1+p^2}}}_{\gS^4} \norm{N \widehat{f_NV_N}}_{L^\infty} \norm{  \frac{1}{\sqrt{1+p^2}} |\nabla \varphi(x)|}_{\gS^4} \norm{ |\nabla| \frac{1}{\sqrt{1+p^2}} }_{\rm op}  \times \\
			& \qquad\quad \times \norm{  \frac{1}{\sqrt{1+p^2}} |\nabla \varphi(x)|}_{\gS^4}  \norm{ |\nabla| \frac{1}{\sqrt{1+p^2}} }_{\rm op} \norm{N \widehat{f_NV_N}}_{L^\infty} \norm{ \frac{1}{\sqrt{1+p^2}} \varphi(x) }_{\gS^4}\\
		&\le C\norm{\varphi}_{L^\infty}^2 \norm{N \widehat{f_NV_N}}_{L^\infty}^2 \norm{\Delta \varphi}_{L^2}^2 \norm{ \frac{1}{1+p^2}}_{L^2}^2\\
		&\quad + 4 \norm{\varphi}_{L^\infty}^2  \norm{N \widehat{f_NV_N}}_{L^\infty}^2 \norm{  \frac{1}{1+p^2}}_{L^2}^2 \norm{\Delta \varphi}_{L^2} \norm{\nabla \varphi(x)}_{L^2} \norm{ \frac{|p|}{1+p^2}}_{\rm op} \\
		&\qquad + 4 \norm{N \widehat{f_NV_N}}_{L^\infty}^2 \norm{\varphi}_{L^4}^2 \norm{\nabla \varphi}_{L^4}^2 \norm{  \frac{1}{\sqrt{1+p^2}}}_{L^4}^4 \norm{ \frac{|p|}{1+p^2}}_{\rm op}^2 \le C.
\end{align*}
\endgroup

\subsubsection*{Dealing with $(\mathrm{III})$} This term is negative and can be ignored for an upper bound.  Nevertheless, we can bound it by H\"older's and the  Kato--Seiler--Simon inequalities, 
\begin{align*}
	|\tr({\rm III})| &= \left| \Tr \left( \varphi(x) N\widehat{f_NV_N} (p) \frac{1}{1+p^2} |\nabla \varphi(x)|^2 \frac{1}{1+p^2} N\widehat{f_NV_N}(p) \varphi(x) \right) \right| \\
	&\leq \norm{\varphi}_{L^\infty(\mathbb{R}^{3})}^2 \norm{N\widehat{f_NV_N}}_{L^\infty(\mathbb{R}^{3})}^2 \norm{|\nabla \varphi(x)| \frac{1}{p^2+1}}_{\rm \gS^2}^2 \\
	& \le \norm{\varphi}_{L^\infty(\mathbb{R}^{3})}^2 \norm{N\widehat{f_NV_N}}_{L^\infty(\mathbb{R}^{3})}^2  \norm{\nabla \varphi}_{L^2(\R^3)}^2 \norm{\frac{1}{p^2+1}}_{L^2(\R^3)}^2 \le C.
\end{align*}

In summary, we deduce from \eqref{eq:1-2-3-second} that 
\begin{align} \label{eq:lwb-appl-2}
	\Tr((1-\Delta)^{-1} \widetilde K^2) &= \Tr \left( \varphi(x) N\widehat{f_NV_N} (p)  \varphi(x) \frac{1}{1+p^2} \varphi(x) N\widehat{f_NV_N}(p) \varphi(x)  \right) \nn\\
	&= 2  N^2  \int_{\mathbb{R}^{3}} (V_Nf_N(1-f_N) \ast \varphi^2)  \varphi^2 + O(1). 
\end{align}
Inserting \eqref{eq:lwb-appl-1}  and \eqref{eq:lwb-appl-2} in \eqref{eq:lwb-appl-aaa} we obtain the desired estimate \eqref{eq:HBog>=fNVN}: 
\[
 \inf {\rm Spec}(\mathbb{H}_{\rm Bog}) \ge -\frac{1}{2}N^2  \int_{\mathbb{R}^{3}} (V_Nf_N(1-f_N) \ast \varphi^2)  \varphi^2  -C. 
\]
 \end{proof}

\subsection{Conclusion of lower bound} 

\begin{proof}[Proof of Lemma \ref{lem:lwb}] From Lemma \ref{lem:red-qua} and Lemma \ref{lem:missing_term}, we have
\begin{align*} 
	H_N &\ge \begin{multlined}[t]
		N  \int_{\mathbb{R}^{3}} \left(|\nabla \varphi_{\rm GP} |^2 + \Vext |\varphi_{\rm GP}|^2 \right)  + \frac{N^2}{2} \int_{\R^3} \left ((((2f_N-f_N^2)V_N)*\varphi_{\rm GP}^2)\varphi_{\rm GP}^2 \right)\\
		+  (\mu- \mu_1) \cN_+ +  \inf {\rm Spec} (\mathbb{H}_{\rm Bog})- C
	\end{multlined}\\
	&\ge \begin{multlined}[t]
		N\int_{\mathbb{R}^{3}} \left(|\nabla \varphi_{\rm GP} |^2 + \Vext |\varphi_{\rm GP}|^2 \right)  + \frac{N^2}{2} \int_{\R^3} \left ((((2f_N-f_N^2)V_N)*\varphi_{\rm GP}^2)\varphi_{\rm GP}^2 \right)\\
		+ (\mu- \mu_1) \cN_+ -\frac{1}{2}N^2  \int_{\mathbb{R}^{3}} (V_Nf_N(1-f_N) \ast \varphi^2)  \varphi^2  -C
	\end{multlined}\\
	& = \begin{multlined}[t]
		N \int_{\mathbb{R}^{3}} \left(|\nabla \varphi_{\rm GP} |^2 + \Vext |\varphi_{\rm GP}|^2 \right)  + \frac{N^2}{2} \int_{\R^3} \left (((f_N V_N)*\varphi_{\rm GP}^2)\varphi_{\rm GP}^2 \right) \\
		+ (\mu- \mu_1) \cN_+  -C.
	\end{multlined}
\end{align*}
It remains to show that
\begin{equation} \label{eq:fnvn-delta}
	\frac{N^2}{2} \int_{\R^3} \left (((f_N V_N)*\varphi_{\rm GP}^2)\varphi_{\rm GP}^2 \right) = N 4\pi \ao \int_{\R^3} |\varphi_{\rm GP}|^4 + O(1). 
\end{equation}
In fact, we have 
\begin{multline*}
	\left| N \int \left (((f_N V_N)*\varphi_{\rm GP}^2)\varphi_{\rm GP}^2 \right) - 8\pi \ao \int |\varphi_{\rm GP}|^4\right|\\
	\begin{aligned}[b]
		&= \left| \int N \widehat {f_NV_N}(k) |\widehat {\varphi_{\rm GP}^2}(k)|^2 \dd k - \widehat {fV}(0) \int |\widehat {\varphi_{\rm GP}^2}(k)|^2 \dd k \right|\\
		&=  \left| \int  \left( \widehat {fV}(k/N) - \widehat {fV}(0) \right) |\widehat {\varphi_{\rm GP}^2}(k)|^2 \dd k \right|\\
		&\le \norm{\nabla_k \widehat {fV}}_{L^\infty} \int  |k/N| |\widehat {\varphi_{\rm GP}^2}(k)|^2 \dd k\le C N^{-1} \norm{|x|fV}_{L^1} \norm{\varphi_{\rm GP}^2}_{H^{1/2}} \le C N^{-1}.
	\end{aligned}
\end{multline*}
Thus \eqref{eq:fnvn-delta} holds true. Hence we find that 
\begin{multline*} 
	H_N \ge N \int_{\mathbb{R}^{3}} \left(|\nabla \varphi_{\rm GP} |^2 + \Vext |\varphi_{\rm GP}|^2 \right) + 4\pi \ao N \int_{\R^3} |\varphi_{\rm GP}|^4\\
	+ (\mu- \mu_1) \cN_+  -C= N e_{\rm GP} + (\mu- \mu_1) \cN_+  -C.
\end{multline*}
Since $\mu>\mu_1$ and $\cN_+=\sum_{i=1}^N Q_{x_i}$, the proof of Lemma \ref{lem:lwb} is complete. 
\end{proof}

\section{Upper bound} \label{sec:upp}

In this section prove the missing energy upper bound. 

\subsection{Construction of the trial state}
Let us explain the construction of the trial state. In the Fock space setting, it is known \cite[Appendix A]{BPS-16} that we can reach the energy $Ne_{\rm GP} + O(1)$ using trial states of the form 
\[  W(\sqrt{N}\varphi_{\rm GP}) \Gamma' W(\sqrt{N}\varphi_{\rm GP})^* \]
where $W(g)=e^{a(g)-a^*(g)}$ is the Weyl operator and $\Gamma'$ is an appropriate quasi-free state. In the following, we will adapt this construction to the $N$-particle Hilbert space. We will use the unitary operator $U_N$ introduced  in  \cite{LNSS-15} instead of the Weyl operator and modify the quasi-free state slightly. Denote 
\[ Q=\1-|\varphi_{\rm GP} \rangle\langle \varphi_{\rm GP}|, \quad \mathcal{H}_+ = QL^2(\R^3). \]
As explained in \cite{LNSS-15}, any function $\Psi_N\in L^2(\R^3)^{\otimes_s N}$ admits a unique decomposition
\begin{equation*}
	\Psi_N = \varphi^{\otimes N} \xi_0 + \varphi^{\otimes N-1} \otimes_s \xi_1 + \varphi^{\otimes N-2} \otimes_s \xi_2 + ... + \xi_N
\end{equation*}
with $\xi_k \in \mathcal{H}_+^{\otimes_s k}$ (with the convention that $\xi_0 \in \mathbb{C}$). This defines a unitary map $U_N$ from $L^2(\R^3)^{\otimes_s N}$ to $\cF^{\le N}(\mathcal{H}_+)$, the truncated Fock space with particle number $\cN \le N$, by 
\begin{equation} \label{eq:def-UN}
	U_N \left(  \sum_{k} \varphi^{\otimes N-k} \otimes_s \xi_k \right) = \bigoplus_{k=0}^N \xi_k.
\end{equation}

Next, let $k$ be the Hilbert--Schmidt operator on $L^2(\R^3)$ with kernel   
\[ k(x,y)= \varphi_{\rm GP}(x) N (1-f_N(x-y))  \varphi_{\rm GP}(y), \]
with $f_N$ the scattering solution in \eqref{eq:scatering-eq-N}. Define $\gamma:\gH_+\to \gH_+$, $\alpha:\gH_+\to \gH_+^*\equiv \overline{\gH_+}$ by 
\[ \gamma=Qk^2Q, \quad \text{and} \quad \alpha=\overline{Q} kQ. \]
Then we have $\gamma\ge 0$, $\Tr \gamma\le \Tr k^2 <\infty$, $\alpha^*=Qk\overline{Q}=\overline{\alpha}$ and  for all $g_1,g_2\in \gH_+$
\begin{align*}
	& \left \langle
	\begin{pmatrix}
		g_1 \\
		\overline{g_2} 
	\end{pmatrix},
	\begin{pmatrix}
		\gamma &  \alpha^* \\
		\alpha & 1 + \overline{\gamma} 
	\end{pmatrix}
	\begin{pmatrix}
		g_1 \\
		\overline{g_2}
	\end{pmatrix}
	\right\rangle_{\gH_+\oplus \gH_+^*} = \left\langle g_1, k^2 g_1 \right\rangle + \left\langle g_2, \left(1+k^2\right) g_2\right\rangle + 2\Re  \left\langle  \overline{g_2}, k {g_1} \right\rangle \ge 0
\end{align*}
by the Cauchy--Schwarz inequality. Thus $(\gamma,\alpha)$ satisfies \eqref{eq:1-pdm-quasi}. Hence, there exists a unique (mixed) quasi-free state $\Gamma$ on the excited Fock space $\cF(\gH_+)$ such that $(\gamma,\alpha)$ are its one-body density matrices, namely 
\begin{equation} \label{eq:def-G+}
	\left\langle g_1, k^2 g_2\right\rangle = \left\langle a^*(g_2)a(g_1) \right\rangle_\Gamma, \quad  \left\langle \overline{g_1}, k g_2\right\rangle = \left\langle a^*(g_2)a^*(g_1) \right\rangle_\Gamma, \quad \forall g_1,g_2\in \gH_+.
\end{equation}


In this section we will prove
\begin{lemma}[Upper bound] \label{lem:upp} Let $V_{\rm ext}$, $V$ as in Theorem~\ref{thm:main} (but without the technical condition \eqref{eq:gap-condition}). Let $U_N=U_N(\varphi_{\rm GP})$ be as in \eqref{eq:def-UN} and let $\Gamma$ be as in  \eqref{eq:def-G+}. Let $\1^{\le N}=\1(\cN\le N)$ be 
the truncation on the particle number operator. Then  
\[
	\Gamma_N := U_N^* \1^{\le N} \Gamma \1^{\le N} U_N
\]
is a non-negative operator on $L^2(\R^3)^{\otimes_s N}$ with $|1- \Tr \Gamma_N| \le C_s N^{-s}$ for any $s \ge 1$, and 
\[ \Tr(H_N \Gamma_N)  \le N e_{\rm GP} + C \]
with the Hamiltonian $H_N$ in \eqref{eq:HN}. Consequently, 
\[ E_N\le \frac{\Tr(H_N \Gamma_N)}{\Tr \Gamma_N} \le Ne_{\rm GP} + C. \]
\end{lemma}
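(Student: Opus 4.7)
The plan is to break the proof into three tasks: positivity and the asymptotic normalization $\Tr\Gamma_N\to 1$, a reduction of the energy to a purely quasi-free computation on $\cF(\gH_+)$, and the scattering-cancellation that extracts $Ne_{\rm GP}$. The first task is quick: $\Gamma_N\ge 0$ follows from $\Gamma\ge 0$, unitarity of $U_N$, and self-adjointness of $\1^{\le N}$, while $\Tr\Gamma_N = 1-\langle \1(\cN>N)\rangle_\Gamma$ so Markov's inequality reduces the trace bound to showing $\langle \cN^s\rangle_\Gamma\le C_s$ for every $s\ge 1$. The latter follows from the quasi-free moment bound \eqref{eq:fluc-N} once we verify $\langle \cN\rangle_\Gamma=\Tr(Qk^2Q)\le \|k\|_{\gS^2}^2\le C$ uniformly in $N$. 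I would do this by splitting $\iint \varphi_{\rm GP}^2(x)\varphi_{\rm GP}^2(y)|N\omega_N(x-y)|^2\,\dd x\dd y$ into the region $|x-y|\le 1/N$ (where $N\omega_N\le CN$ from \eqref{eq:bounds_on_w}, giving $O(N^{-1})$) and the region $|x-y|>1/N$ (where $N\omega_N\le C/|x-y|$, giving a finite double integral thanks to $\varphi_{\rm GP}\in L^\infty\cap L^2$ and a standard Hardy--Littlewood--Sobolev estimate).

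The heart of the argument is the energy bound. Since $U_N$ is unitary, $\Tr(H_N\Gamma_N)=\Tr(U_N H_N U_N^*\,\1^{\le N}\Gamma\1^{\le N})$, so the plan is to expand $\tilde H_N:=U_N H_N U_N^*$ into the Lewin--Nam--Serfaty--Solovej normal-ordered form from \cite{LNSS-15}, a sum of constant, linear, quadratic, cubic and quartic pieces in the excitation operators on $\gH_+$, each dressed by bounded functions of $\cN/N$. Dropping $\1^{\le N}$ will cost $O(N^{-\infty})$ because $\tilde H_N$ has polynomial-in-$N$ norm on each $\cN$-sector and $\Gamma$ has superpolynomial $\cN$-tails from the first step. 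The constant piece gives the Hartree energy $N\int(|\nabla\varphi_{\rm GP}|^2+\Vext\varphi_{\rm GP}^2)+\tfrac{1}{2}N(N-1)\int(V_N*\varphi_{\rm GP}^2)\varphi_{\rm GP}^2$. The linear piece is handled as in the analysis of $\mathcal{H}_1$ in Section~\ref{sec:inhom}: the factor $Q(-\Delta+\Vext+(Nf_NV_N)*\varphi_{\rm GP}^2)\varphi_{\rm GP}$ is $O(N^{-1})$ in $L^2$ by the Gross--Pitaevskii equation \eqref{eq:GP-equation} and the smoothness of $\varphi_{\rm GP}^2$, so Cauchy--Schwarz combined with the uniform moment bound on $\cN$ produces an $O(1)$ contribution.

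For the quadratic and interaction pieces I would apply Wick's theorem using $\gamma=Qk^2Q$, $\alpha=\overline{Q}kQ$. The key identity is the scattering equation $-2\Delta\omega_N=V_Nf_N$ from \eqref{eq:scatering-eq-N-omega}: it ties the kinetic contribution of the correlations, $\Tr((-\Delta)\gamma)$ and $\iint(-\Delta_x k)(x,y)K(x,y)\,\dd x\dd y$, to the interaction pairing $\iint V_N(x-y)\varphi_{\rm GP}(x)\varphi_{\rm GP}(y)k(x,y)\,\dd x\dd y$, in such a way that the Hartree $V_N$ collapses to $V_Nf_N$. Together with $\int V_Nf_N=8\pi\ao/N$ from \eqref{eq:scatering-eq-N} and the $O(N^{-1})$ estimate $|\int(V_Nf_N*\varphi_{\rm GP}^2)\varphi_{\rm GP}^2 - 8\pi\ao N^{-1}\int\varphi_{\rm GP}^4|\le CN^{-2}$ proved at the end of Section~\ref{sec:inhom}, this delivers the interaction term $4\pi\ao\int\varphi_{\rm GP}^4$ of $e_{\rm GP}$. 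All remaining quasi-free remainders---$\iint V_N(x-y)|k(x,y)|^2\,\dd x\dd y$, the cubic Wick contractions, and the $\Vext$ contribution to $\Tr(\Vext\gamma)$---are $O(1)$ via the Kato--Seiler--Simon inequality \eqref{eq:KSS} and the Hilbert--Schmidt control of $k$ from the first paragraph.

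The main obstacle will be the bookkeeping in the third step: the expansion of $\tilde H_N$ produces many cubic and quartic Wick monomials, and one must show that after pairing, the $N^2V(N\cdot)$ singularity is absorbed exactly once by the $-\Delta$ acting on the correlation kernel $k$ (with its $1/|x-y|$ singularity), leaving only the benign $8\pi\ao$ contribution plus explicitly $O(1)$ error. This is morally the mirror image of the lower-bound analysis via Theorem~\ref{thm:general-bound-HBog}, but run forward as an equality up to errors rather than as an inequality, so one can use exact Wick formulas in place of Bogoliubov diagonalization and avoid dealing with the operator square-root $\sqrt{D(D+2K)D}$.
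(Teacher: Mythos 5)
Your plan follows essentially the same route as the paper: conjugation by $U_N$ with the Lewin--Nam--Serfaty--Solovej expansion, removal of the particle-number cutoff via the superpolynomial tail decay of the quasi-free state, Wick's theorem with $(\gamma,\alpha)=(Qk^2Q,\overline{Q}kQ)$, and the scattering equation $-2\Delta\omega_N=V_Nf_N$ to collapse the Hartree, kinetic-correlation and pairing contributions to $\tfrac{N^2}{2}\int((V_Nf_N)*\varphi_{\rm GP}^2)\varphi_{\rm GP}^2$. The only cosmetic differences are that the linear and cubic monomials have exactly vanishing expectation in the quasi-free state (so your $\mathcal{H}_1$-style estimate is unnecessary), and the paper closes with the one-sided Young inequality $\tfrac{N^2}{2}\int((V_Nf_N)*\varphi_{\rm GP}^2)\varphi_{\rm GP}^2\le 4\pi\ao N\int\varphi_{\rm GP}^4$ rather than the two-sided Riemann-sum estimate you cite.
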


\begin{remark} In the Fock space setting in \cite[Appendix A]{BPS-16}, the quasi-free state $\Gamma'$ is constructed using an explicit Bogoliubov transformation of the form 
\begin{equation*} 
T_0=\exp\left(\frac{1}{2} \iint_{\mathbb{R}^{3}\times\mathbb{R}^{3}} k(x,y) (a_x^* a_y^* - a_x a_y) \dd x \dd y \right).
\end{equation*}
Its action on creation and annihilation operators is given for any $g\in L^2(\mathbb{R}^{3})$ by
\begin{equation*}\label{eq:action_of_T_on_a}
T_0^* a^*(g) T_0 = a^*(\ch(k) g) + a(\sh(k) \overline{g}),
\end{equation*}
where 
\begin{equation*}
\ch(k) = \sum_{n\geq0} \frac{(k \overline{k})^n}{(2n)!} \quad \textrm{ and } \quad \sh(k) = \sum_{n\geq0} \frac{(k \overline{k})^nk}{(2n+1)!}.
\end{equation*}
The one-body density matrices of $\Gamma'$ can be computed in terms of $\ch(k)$ and $\sh(k)$. Our construction of the quasi-free state $\Gamma$ is slightly  different as its one-body density matrices are given exactly in terms of $k$. This makes the energy computation easier. 
\end{remark}

We divide the proof of Lemma \eqref{lem:upp} into several steps.

\subsection{Operator bound on Fock space} First, we analyse the action of the unitary transformation $U_N$. 

\begin{lemma}[Operator bound on Fock space] \label{lem:UHU*} We have the operator inequality 
\begin{align}
	\1^{\le N} U_N H_N U_N^*\1^{\le N} \le \1_{\cF_+} (\mathcal{G}_N + C(\cN+1)^6 )   \1_{\cF_+} \quad \text{on }\cF(\gH_+),
\end{align}
where $\mathcal{G}_N$ is the following operator on the full Fock space $\cF(L^2(\R^3))$:
\begin{equation*}
	\mathcal{G}_N = \begin{multlined}[t]
		N \int \left(|\nabla\varphi_{\rm GP}|^2 + \Vext |\varphi_{\rm GP}|^2 \right) + \frac{N^2}{2} \int \left(V_N \ast \varphi_{\rm GP}^2\right)  \varphi_{\rm GP}^2  \\
		+  \sqrt{N} \left(  a  \left(  \left(-\Delta + V_{\rm ext} + NV_N*\varphi_{\rm GP}^2\right)\varphi_{\rm GP}\right) + {\rm h.c.}\right) \\
		+ \dd\Gamma (-\Delta+V_{\rm ext}) + \frac{N}{2} \iint V_N (x-y) \varphi_{\rm GP}(x) \varphi_{\rm GP}(y) (a^*_x a^*_y + {\rm h.c.})   \dd x\, \dd y  \\
		+ \sqrt{N} \iint V_N(x-y) \varphi_{\rm GP}(x) (a^*_y a_{x} a_{y} + {\rm h.c.} ) \dd x \dd y \\
		+ \frac{1+CN^{-1}}{2}  \iint V_N(x-y) a^*_x a^*_y a_{x} a_{y} \dd x \, \dd y. 
	\end{multlined}
\end{equation*}
\end{lemma}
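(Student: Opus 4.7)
The plan is to expand $U_N H_N U_N^*$ using the Lewin--Nam--Serfaty--Solovej substitution rules from \cite{LNSS-15} and match the result term-by-term with $\mathcal{G}_N$. I would start from the second-quantized form \eqref{eq:2nd-Q} of $H_N$ and decompose every field operator as $a_x = \varphi_{\rm GP}(x)\, a_0 + b_x$, where $b_x = \sum_{n\ge 1} \varphi_n(x) a_n$ is the excited part acting on $\gH_+$. Plugging this into $\dGamma(-\Delta + V_{\rm ext})$ produces a piece $\langle \varphi_{\rm GP},(-\Delta+V_{\rm ext})\varphi_{\rm GP}\rangle\, a_0^*a_0$, a cross term $a_0^* a(Q(-\Delta+V_{\rm ext})\varphi_{\rm GP}) + \mathrm{h.c.}$, and the diagonal $\dGamma(Q(-\Delta+V_{\rm ext})Q)$. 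Expanding $\tfrac{1}{2}\iint V_N(x-y) a_x^* a_y^* a_x a_y\,\dd x\,\dd y$ yields sixteen monomials which, after integrating the $\varphi_{\rm GP}$ coefficients against $V_N$, regroup into five families indexed by the number $k\in\{0,1,2,3,4\}$ of $b^\#$'s they contain.

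Next I apply the LNSS rules $U_N a_0^* a_0 U_N^* = N-\cN$, $U_N a^*(g) a_0 U_N^* = a^*(g)\sqrt{N-\cN}$, $U_N a_0^* a(g) U_N^* = \sqrt{N-\cN}\, a(g)$ for $g\in\gH_+$, together with their iterates for several $a_0^\#$, which produce properly ordered $\sqrt{(N-\cN)(N-\cN-1)}$-type factors multiplying the $b^\#$'s. Replacing each such factor by its value at $\cN=0$ matches $\mathcal{G}_N$ term-by-term: the ``$4$-condensate'' interaction piece plus $\langle \varphi_{\rm GP},(-\Delta+V_{\rm ext})\varphi_{\rm GP}\rangle(N-\cN)$ gives the constant part of $\mathcal{G}_N$; the kinetic cross combined with the ``$3$-condensate/$1$-excited'' piece of the interaction gives the linear term with argument $(-\Delta+V_{\rm ext}+NV_N*\varphi_{\rm GP}^2)\varphi_{\rm GP}$; the ``$2$-condensate'' off-diagonal piece gives the squeezing term with prefactor $N/2$; the ``$1$-condensate/$3$-excited'' piece yields the cubic term with prefactor $\sqrt{N}$; the ``$0$-condensate'' piece is unchanged and contributes the quartic; and $\dGamma(Q(-\Delta+V_{\rm ext})Q)$ is exactly the restriction of $\dGamma(-\Delta+V_{\rm ext})$ to $\cF(\gH_+)$.

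Three kinds of remainders must then be estimated. Replacement errors of the form $(\sqrt{N-\cN}-\sqrt N)\cdot X$ and $(\sqrt{(N-\cN)(N-\cN-1)}-N)\cdot X$ are handled using the truncation $\1^{\le N}$ (which enforces $\cN\le N$ on the image of $U_N$) together with Cauchy--Schwarz against $\cN$. The residual diagonal ``$2$-condensate/$2$-excited'' contribution gives $N\dGamma(V_N*\varphi_{\rm GP}^2) + N\dGamma(K)$, where $K$ has kernel $V_N(x-y)\varphi_{\rm GP}(x)\varphi_{\rm GP}(y)$; both are bounded by $C(\cN+1)$ since $NV_N*\varphi_{\rm GP}^2 \le 8\pi\ao\|\varphi_{\rm GP}\|_{L^\infty}^2$ and $\|NK\|_{\rm op}\le C$. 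The delicate step, and the main obstacle, is the cubic remainder: the operator $\iint V_N(x-y)\varphi_{\rm GP}(x) a_y^* a_x a_y\,\dd x\,\dd y$ cannot be bounded by any power of $\cN$ alone because $V_N = N^2 V(N\cdot)$ is singular, so its correction is absorbed into the quartic via the Cauchy--Schwarz estimate
\[
\pm\iint V_N(x-y)\varphi_{\rm GP}(x)\bigl(a_y^*a_xa_y + \mathrm{h.c.}\bigr)\,\dd x\,\dd y \le \eps \iint V_N(x-y) a_x^*a_y^*a_xa_y\,\dd x\,\dd y + C\eps^{-1}(\cN+1),
\]
which is precisely why the quartic prefactor in $\mathcal{G}_N$ is $(1+CN^{-1})/2$ rather than $1/2$. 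The same strategy handles the squeezing remainders. Tracking the powers of $(\cN+1)$ accumulated through successive Cauchy--Schwarz estimates (each raising the power by at most one) yields the announced polynomial bound $C(\cN+1)^6$.
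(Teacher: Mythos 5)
Your expansion and term-by-term matching follow the paper's route: the LNSS identity for $U_N H_N U_N^*$, replacement of the $\sqrt{N-\cN_+}$-type factors by their values at $\cN_+=0$ with Cauchy--Schwarz errors, and absorption of the singular cubic and pairing remainders into the quartic term (which is indeed the origin of the prefactor $(1+CN^{-1})/2$). However, there is a genuine gap at the end. What your argument delivers is
\begin{equation*}
\1^{\le N} U_N H_N U_N^* \1^{\le N} \;\le\; \1_{\cF_+}\, \1^{\le N}\bigl(\mathcal{G}_N + C(\cN+1)^2\bigr)\1^{\le N}\,\1_{\cF_+},
\end{equation*}
whereas the lemma asserts an upper bound by $\1_{\cF_+}(\mathcal{G}_N + C(\cN+1)^6)\1_{\cF_+}$ with \emph{no} truncation on $\mathcal{G}_N$. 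Removing the cutoff is not automatic: the sign-definite terms of $\mathcal{G}_N$ that commute with $\cN$ can be extended past $\1^{\le N}$ for an upper bound, but the linear, pairing and cubic terms --- call their sum $F$ --- are neither sign-definite nor commuting with $\cN$, so $\1^{\le N}F\1^{\le N}\le F+(\text{error})$ needs a separate argument. The paper supplies one: first the crude a priori bound $\pm F\le C(N^3+\cN^2)$, then, with $F_1:=F+C_0(N^3+\cN^2)\ge 0$ and $\1^{>N}=\1-\1^{\le N}$, a Cauchy--Schwarz estimate on the off-diagonal pieces $\1^{\le N}F_1\1^{>N}+\mathrm{h.c.}$ with weights $N^{\pm 3}$, using that $\cN>N$ on the range of $\1^{>N}$ to turn $N^3(N^3+\cN^2)\1^{>N}$ into $C(\cN+1)^6$. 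This step is also the true source of the exponent $6$; your claim that the power accumulates through successive Cauchy--Schwarz estimates each raising the power by one is not what happens --- all the replacement errors are only $O((\cN+1)^2)$.

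The untruncated form of the bound is not a cosmetic point: in the proof of the energy upper bound one computes $\Tr(\mathcal{G}_N\Gamma)$ by Wick's theorem for the quasi-free state $\Gamma$, which is only legitimate because the cutoff has been removed. A minor further remark: your displayed Cauchy--Schwarz inequality bounds the full cubic operator, but what must actually be absorbed into the quartic is the remainder carrying the factor $\sqrt{N}-\sqrt{N-\cN_+}$; the estimate does go through (take $\eps\sim N^{-1}$ and use $(\sqrt{N}-\sqrt{N-\cN_+})^2\le \cN_+^2/N$), but it should be stated for that remainder rather than for the cubic term itself.
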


\begin{proof} Let us write $\varphi=\varphi_{\rm GP}$ for short.  After a straightforward computation as in \cite[Section 4]{LNSS-15} (see also \cite[Appendix B]{LNS-15}) using \eqref{eq:2nd-Q} and the rules
\begin{equation*}
	U_N a^*(\varphi)a(g_1) U_N^* =  \sqrt{N-\cN_+} a(g_1),  \quad U_N a^*(g_1)a(g_2) U_N^* = a^*(g_1)a(g_2), \quad \forall g_1,g_2\in \gH_+
\end{equation*}
we obtain 
\begin{equation} \label{eq:UHU-wG}
	\1^{\le N} U_N H_N U_N^* \1^{\le N} = \1_{\cF_+}  \1^{\le N} \widetilde{\mathcal{G}}_N  \1^{\le N}  \1_{\cF_+}  \quad \text{ on }  \cF_+,
\end{equation}
where $\widetilde{\mathcal{G}}_N$ is the following operator on the truncated Fock space $\1^{\le N} \cF(L^2(\R^3))$:
\begin{align} \label{eq:wGN}
 	\widetilde{\mathcal{G}}_N &=	\left(  (N-\cN_+) \int \left(|\nabla\varphi ^2 + \Vext |\varphi |^2 \right) + \frac{1}{2} (N-\cN_+)(N-\cN_+-1)\int \left(V_N \ast \varphi ^2\right)  \varphi ^2 \right) \nn \\
		&\qquad + \left( \sqrt{N-\cN_+} a((-\Delta + V_{\rm ext})\varphi) + {\rm h.c.} \right) \nn\\
		&\qquad+ \left( (N-\cN_+-1) \sqrt{N-\cN_+} a\left(\left(V_N*\varphi^2\right)\varphi\right) + {\rm h.c.} \right)\nn \\ 
		&\qquad+ \left(  \dGamma (-\Delta + V_{\rm ext}) + (N-\cN_+)  \dGamma(V_N*\varphi^2 + N^{-1} K) \right)\nn\\ 
		&\qquad+ \left( \frac{1}{2} \iint  K(x,y)a^*_x a^*_y \,\dd x\, \dd y\,  \frac{\sqrt{(N-\cN_+)(N-\cN_+-1)}}{N} + {\rm h.c.} \right)\nn\\
		&\qquad+ \left( \sqrt{N-\cN_+} \iint  V_N(x-y) \varphi (x) \, a^*_y a_{x'}a_{y'}\,\dd x\, \dd y+ {\rm h.c.} \right)\nn\\
		&\qquad+ \frac{1}{2} \iint  V_N(x-y) a^*_xa^*_y a_{x} a_{y}\,\dd x\, \dd y\nn\\
	&=: {\rm (I)} + {\rm (II)} + {\rm (III)} + {\rm (IV)} + {\rm (V)} + {\rm (VI)} + {\rm (VII)}
\end{align}
where $K$ is the operator on $L^2(\R^3)$ with kernel $K(x,y)=\varphi(x)  NV_N(x-y) \varphi(y)$. Here unlike the presentation in \cite{LNSS-15,LNS-15}, we do not  put the projection $Q$ in the expression of $\widetilde{G}_N$ because we have introduced the projection $\1_{\cF_+}$ in \eqref{eq:UHU-wG}.  
 
 Now let us simplify further $\widetilde{\mathcal{G}}_N$, which is a proper operator on $\1^{\le N}\cF$.

 \subsubsection*{Analysis of {\rm (I)}} Using $ |N-\cN_+|\le N$, we have 
\begin{equation} \label{eq:wGN-1}
	{\rm (I)} \le N \int \left(|\nabla\varphi ^2 + \Vext |\varphi |^2\right) + \frac{N^2}{2} \int \left(V_N \ast \varphi ^2\right)  \varphi ^2.
\end{equation}
 
\subsubsection*{Analysis of {\rm (II)}}  Let us replace $\sqrt{N-\cN_+}$ by $\sqrt{N}$. By the Cauchy--Schwarz inequality  we have
\begin{multline} \label{eq:wGN-2}
	\pm  \left( \left(\sqrt{N-\cN_+}-\sqrt{N}\right) a((-\Delta + V_{\rm ext})\varphi) + {\rm h.c.} \right)\\
	\begin{aligned}[b]
		&\le  N \left(\sqrt{N-\cN_+}-\sqrt{N}\right)^2 + N^{-1} a^*((-\Delta + V_{\rm ext})\varphi) a((-\Delta + V_{\rm ext})\varphi)\\
		&\le N \left( \frac{\cN_+}{\sqrt{N-\cN_+}+\sqrt{N} } \right)^2 + N^{-1}  \cN \norm{\left(-\Delta + V_{\rm ext}\right)\varphi}_{L^2}^2 \le C(\cN^2+1). 
	\end{aligned}
 \end{multline}
 Here we have used $a^*(g)a(g)\le \cN \|g\|_{L^2}^2$ and the fact that $(-\Delta + V_{\rm ext})\varphi \in L^2$.
 
 \subsubsection*{Analysis of {\rm (III)}} We can replace $(N-\cN_+-1)\sqrt{N-\cN_+}$ by $N\sqrt{N}$ as
\begin{align} \label{eq:wGN-3}
	&\pm   \left( \left((N-\cN_+-1) \sqrt{N-\cN_+} -N\sqrt{N}\right) a\left(\left(V_N*\varphi^2\right)\varphi\right) + {\rm h.c.} \right)\nn\\
	&\le N^{-1}   \left((N-\cN_+-1) \sqrt{N-\cN_+} -N\sqrt{N} \right)^2 + N  a^*\left(\left(V_N*\varphi^2\right)\varphi\right)a\left(\left(V_N*\varphi^2\right)\varphi\right)\nn\\
	&\le C\left( \cN_+^2 +1\right) + N  \cN \norm{\left(V_N*\varphi^2\right)\varphi}_{L^2}^2  \le C\left(\cN^2+1\right).
 \end{align}
 Here we have used
\[
 \norm{\left(V_N*\varphi^2\right)\varphi}_{L^2} \le \norm{V_N*\varphi^2}_{L^\infty} \norm{\varphi}_{L^2} \le \norm{V_N}_{L^1}\norm{\varphi^2}_{L^\infty} \norm{\varphi}_{L^2} \le CN^{-1}. 
 \]
 
 \subsubsection*{Analysis of {\rm (IV)}} Similarly to \eqref{eq:K-op} we have $\|K\|_{\rm op} \le C$. Combining with the uniform bound $\|V_N*\varphi^2\|_{L^\infty} \le CN^{-1}$ used above,  we have
\begin{align} \label{eq:wGN-4}
	\pm  (N-\cN_+)  \dGamma(V_N*\varphi^2 + N^{-1} K) \le C\cN. 
\end{align}

  \subsubsection*{Analysis of {\rm (V)}} We can replace $N^{-1}\sqrt{(N-\cN_+)(N-\cN_+-1)}$ by $1$ as
\begin{align} \label{eq:wGN-5}
	&\pm \left(  \iint  K(x,y)a^*_x a^*_y  \left( \frac{\sqrt{(N-\cN_+)(N-\cN_+-1)}}{N}  -1 \right) \,\dd x\, \dd y + {\rm h.c.}\right)  \nn \\
	&\le \iint \left(  |K(x,y)| N^2 \left( \frac{\sqrt{(N-\cN_+)(N-\cN_+-1)}}{N}  -1 \right)^2 + \frac{|K(x,y)|}{N^2}a_x^* a_y^* a_x a_y\right) \dd x \dd y \nn\\
	&\le C \iint \left( NV_N(x-y) \left(\varphi^2(x)+\varphi^2(y) \right) \cN_+^2  + N^{-1} \norm{\varphi}_{L^\infty}^2 V_N(x-y) a_x^* a_y^* a_x a_y\right) \dd x \dd y \nn\\
	&\le  C\cN_+^2 + CN^{-1} \iint V_N (x-y) a_x^* a_y^* a_x a_y \dd x \dd y. 
\end{align}

 \subsubsection*{Analysis of {\rm (VI)}} We can replace $\sqrt{N-\cN_+}$ by $\sqrt{N}$ as
\begin{multline} \label{eq:wGN-6}
	\pm  \left( \left(\sqrt{N}- \sqrt{N-\cN_+}\right) \iint V_N(x-y) (\varphi(x) a_y^* a_x a_y + {\rm h.c.}) \dd x\, \dd y \right) \\
	\begin{aligned}[b]
		&\le \begin{multlined}[t]
			N \norm{\varphi}_{L^\infty}^2 \iint |V_N(x-y)|  a_y^* \left(\sqrt{N}- \sqrt{N-\cN_+}\right)^2 a_y   \dd x\, \dd y \\
			+ N^{-1}\iint V_N(x-y) a_x^* a_y^* a_x a_y \dd x \dd y
		\end{multlined}\\
		&\le C\cN^2 + N^{-1}\iint V_N(x-y) a_x^* a_y^* a_x a_y \dd x \dd y. 
	\end{aligned}
\end{multline}
Here we have used $(\sqrt{N}- \sqrt{N-\cN_+})^2\le  N^{-1}\cN_+ \le \cN_+$ and 
\[
	\int a^*_y \cN_+ a_y \dd y = \int a^*_y a_y (\cN_+ +1) \dd y = \cN (\cN_++1)\le 2 \cN^2. 
\]

\subsubsection*{Conclusion} Inserting \eqref{eq:wGN-1}--\eqref{eq:wGN-6} in \eqref{eq:wGN}, we deduce from \eqref{eq:UHU-wG} that 
\begin{equation} \label{eq:GN-1<}
	\1^{\le N} U_N H_N U_N^* \1^{\le N} \le  \1_{\cF_+}\1^{\le N} ( \mathcal{G}_N + C(\cN+1)^2 )  \1^{\le N} \1_{\cF_+} \quad \text{ on } \cF_+. 
\end{equation}

Now we remove the cut-off $\1^{\le N}$ on the right side of \eqref{eq:GN-1<}. For all terms which are positive and commute with $\cN$, the cut-off $\1^{\le N}$ can be removed for an upper bound. It remains to consider the operator 
\begin{multline*}
	F :=   \sqrt{N} \left(  a \left(  \left(-\Delta + V_{\rm ext} + NV_N*\varphi^2 \right)\varphi \right) + {\rm h.c.}\right) + \frac{1}{2} \iint K(x,y)  (a^*_x a^*_y + {\rm h.c.})   \dd x\, \dd y \\
	+  \sqrt{N} \iint V_N(x-y) \varphi(x) (a^*_y a_{x} a_{y} + {\rm h.c.} ) \dd x \dd y 
\end{multline*} 
on $\cF$. By the Cauchy--Schwarz inequality we can bound
\begin{align} \label{eq:F-F1-CN}
	\pm F &\le \begin{multlined}[t]
		N + \cN \norm{\left(-\Delta + V_{\rm ext} + NV_N*\varphi^2\right)\varphi}_{L^2}^2 +  \iint \left(|K(x,y)|^2 + a_x^*a_y^* a_x a_y \right) \dd x \dd y\\
		+ \iint \left(N |V_N(x-y)|^2 |\varphi(x)|^2 a_y^* a_y + a_x^*a_y^* a_x a_y \right) \dd x \dd y
	\end{multlined}\nn\\
	&\le C(N^3 + \cN^2). 
\end{align}
Denote 
\[ F_1:= F + C_0(N^3 + \cN^2) \ge 0 \quad \textrm{ and } \quad  \1^{>N}=\1-\1^{\le N}. \]
By the Cauchy-Schwarz inequality and \eqref{eq:F-F1-CN} we can bound 
\begin{align*}
	\1^{\le N} F \1^{\le N} - F &= - \1^{\le N} F_1 \1^{>N} - \1^{>N} F_1\1^{\le N} - \1^{>N} F\1^{>N}\\
	&\le N^{-3} \left(\1^{\le N} F_1 \1^{\le N}\right)  + N^3 \left( \1^{>N} F_1 \1^{>N}\right) -  \1^{>N} F\1^{>N}\\
	&\le CN^{-3} \left(N^3+\cN^2\right) \1^{\le N}  + C N^{3} \left(N^3+\cN^2\right) \1^{>N} \le C (\cN +1) ^6. 
\end{align*}
%
Thus in conclusion, we have
\[
	\1^{\le N} \mathcal{G}_N\1^{\le N} \le \mathcal{G}_N + C(\cN+1)^6. 
\]
Inserting this in \eqref{eq:GN-1<} we conclude the proof of Lemma \ref{lem:UHU*}. 
\end{proof}

\subsection{Conclusion of upper bound}

\begin{proof}[Proof of  Lemma \ref{lem:upp}]
Now consider the mixed state $\Gamma_N=U_N^* \1^{\le N} \Gamma \1^{\le N} U_N$. Again we will write $\varphi=\varphi_{\rm GP}$ for short. 

\subsubsection*{Trace normalization} Since 
\begin{equation} \label{eq:1>N-tre}
	\1^{>N}:=\1-\1^{\le N} \le \cN^s N^{-s}, \quad \forall s\ge 1
\end{equation}
we have 
\begin{align} \label{eq:trace-Gamma-N}
	0 &\le 1- \Tr \Gamma_N = 1- \Tr\left(\1^{\le N}\Gamma\right) = \Tr \left(\1^{>N}\Gamma\right) \nn\\
	&\le N^{-s} \tr\left( \cN^s \Gamma  \right) \le  N^{-s} C_s (1+ \Tr\left(\cN \Gamma\right))^s \le C_s N^{-s}, \quad \forall s\ge 1. 
\end{align}
Here we have used \eqref{eq:fluc-N} and $\Tr(\cN \Gamma)= \Tr \gamma\le \Tr k^2 \le C$. 

\subsubsection*{Energy expectation}
Thanks to Lemma \ref{lem:UHU*}, we have
\begin{equation} \label{eq:Tr-HN-GN}
	\Tr\left(H_N \Gamma_N\right)=  \Tr\left( \1^{\le N} U_N H_N U_N^* \1^{\le N} \Gamma\right) \le \Tr\left( \left(\mathcal{G}_N + C(\cN+1)^6\right) \Gamma\right). 
\end{equation}
Using \eqref{eq:fluc-N}  again  we have $\Tr((\cN+1)^6\Gamma)\le C$. Moreover, since $\Gamma$ is a quasi-free state on $\cF_+$ with the one-body density matrices $(\gamma,\alpha)$, by Wick Theorem we have
\begin{align} \label{eq:GN-Gamma}
	\Tr (\mathcal{G}_N \Gamma) &=  N \int \left(|\nabla\varphi |^2 + \Vext |\varphi |^2 \right) + \frac{N^2}{2} \int \left(V_N \ast \varphi ^2\right)  \varphi ^2 \nn\\
	& \quad + \Tr((-\Delta+V_{\rm ext})\gamma)+ \Re N \iint V_N(x-y) \varphi(x) \varphi(y) \alpha(x,y) \dd x \dd y \\
	&\quad + \frac{1+CN^{-1}}{2}\iint V_N(x-y) \left(\gamma(x,x)\gamma(y,y) +|\gamma(x,y)|^2 + |\alpha(x,y)|^2 \right) \dd x \dd y. \nn
\end{align}

It remains to evaluate the right side of \eqref{eq:GN-Gamma} term by term. 

\subsubsection*{Kinetic energy} Using $\gamma=Qk^2Q=(1-P)k^2(1-P)$ we can decompose
\[
	\Tr\left(-\Delta \gamma\right)= \Tr\left(-\Delta k^2\right) + 2\Re  \Tr\left(\Delta P k^2 \right)  + \Tr\left(-\Delta P k^2 P\right).
\]
We have
\begin{align*}
	\Tr\left(-\Delta P k^2 P\right)&= \langle \varphi, -\Delta \varphi\rangle \left\langle \varphi, k^2 \varphi\right\rangle \le C, \\
	\left|\Tr\left(\Delta P k^2\right)\right| &= \left|\langle \varphi, k^2 (\Delta \varphi)\rangle \right| \le \norm{\varphi}_{L^2} \norm{\Delta\varphi}_{L^2} \norm{k^2}_{\rm op} \le C. 
\end{align*}
Now consider the main term $\Tr(-\Delta k^2)$. Similarly to \eqref{eq:K-phi-g}, we write the operator $k$ as 
\[ k=  \varphi(x) N\widehat{\omega_N}(p) \varphi(x)\quad \text{ on } L^2(\R^3) \]
where  $\omega_N=1-f_N$ and $\varphi(x)$, $\widehat{\omega_N}(p)$ are multiplication operators on the position and momentum spaces. By the IMS formula \eqref{eq:IMS-formula} we can decompose  
\begin{align*}
	\Tr(-\Delta k^2)&= N^2 \tr\left(  \varphi(x)  p^2 \varphi(x)  \widehat{\omega_N}(p)  \varphi^2(x) \widehat{\omega_N}(p)  \right)\\
	&=\begin{multlined}[t]
		\frac{N^2}{2} \Tr \left( \left(\varphi^2(x) p^2 + p^2 \varphi^2(x)\right)  \widehat{\omega_N}(p)  \varphi^2(x) \widehat{\omega_N}(p)  \right) \\
		+ N^2 \tr\left(  |\nabla \varphi(x)|^2  \widehat{\omega_N}(p)  \varphi^2(x) \widehat{\omega_N}(p)  \right).
	\end{multlined}
\end{align*}
The first term can be computed exactly using the scattering equation \eqref{eq:scatering-eq-N-omega} and \eqref{eq:tr-int}: 
\begin{multline*}
	\frac{N^2}{2} \Tr \left( \left(\varphi^2(x) p^2 + p^2 \varphi^2(x)\right)  \widehat{\omega_N}(p)  \varphi^2(x) \widehat{\omega_N}(p)  \right)\\
	=\frac{N^2}{2} \Re \Tr \left( \varphi^2(x)    \widehat{V_Nf_N}(p)  \varphi^2(x) \widehat{\omega_N}(p)  \right) = \frac{N^2}{2} \int N \left(\left(V_Nf_N \omega_N\right)*\varphi^2\right) \varphi^2. 
\end{multline*} 
The second term can be bounded using  \eqref{eq:tr-int}
\[ N^2 \tr\left( |\nabla \varphi(x)|^2  \widehat{\omega_N}(p)  \varphi^2(x) \widehat{\omega_N}(p)  \right)= N^2  \int \left(\omega_N^2*\varphi^2\right) |\nabla \varphi|^2 \le C \norm{\nabla \varphi}_{L^2}^4. \]
Here we have used 
\begin{equation} \label{eq:omega*phi2}
	N^2  \norm{\omega_N^2*\varphi^2}_{L^\infty} \le C  \norm{|x|^{-2}* \varphi^2}_{L^\infty} \le  C \norm{\nabla \varphi}_{L^2}^2
\end{equation}
by \eqref{eq:bounds_on_w} and Hardy's inequality $|x|^{-2}\le 4(-\Delta)$. Thus 
\begin{equation} \label{eq:GN-Gamma-1}
	\Tr(-\Delta \gamma)= \frac{N^2}{2} \int N \left(\left(V_Nf_N (1-f_N)\right) *\varphi^2\right) \varphi^2 + O(1). 
\end{equation}

\subsubsection*{External potential energy} Using \eqref{eq:omega*phi2} again, we have
\begin{align*}
	|(k^2)(x,y)|&= \left| \int \dd z k(x,z) k(z,y) \right| = N^2 \left| \int \dd z  \varphi(x)  \omega_N(x-z) \varphi^2(z) \omega_N(z-y)  \varphi(y) \right| \\
	&\le  \varphi(x) \varphi(y)  \frac{N^2}{2}  \int \dd z  \left(\omega_N^2(x-z)+ \omega_N^2(z-y)\right) \varphi^2(z)  \le C \varphi(x) \varphi(y). 
\end{align*}
Hence, 
\begin{align} \label{eq:gamma-x-y}
	|\gamma(x,y)| &= \left|\left((\1-P) k^2 (\1-P)\right)(x,y)\right| \nn\\
	&=\begin{multlined}[t]
		\left| \left(k^2\right)(x,y) - \varphi(x) \int \dd z \varphi(z) \left(k^2\right)(z,y)   - \varphi(x)  \int \dd z \varphi(z) \left(k^2\right)(x,z)\right. \\
		\left.+ \varphi(x)\varphi(y)  \iint \dd x \dd y \varphi(x)\varphi(y) \left(k^2\right)(x,y) \right|
	\end{multlined}\nn\\
	&\le C  \varphi(x) \varphi(y).
\end{align}
Consequently, 
\begin{align} \label{eq:GN-Gamma-2}
	\Tr(V_{\rm ext} \gamma) = \int V_{\rm ext}(x) \gamma(x,x) \dd x \le C \int V_{\rm ext} \varphi^2 \le C. 
\end{align}

\subsubsection*{Bogoliubov pairing energy} Since $\alpha(x,y)=(Q\otimes Q k)(x,y)$, by decomposing $Q=\1-P$ we have
\begin{align} \label{eq:alpha-x-y}
	|\alpha(x,y) - k(x,y)| &=
	\begin{multlined}[t]
		\left| - \varphi(x) \left(\varphi^2*N\omega_N\right)(y) \varphi(y)  - \varphi(x)    \left(\varphi^2*N\omega_N\right)(x) \varphi(y)\vphantom{\int}\right.\\
		\left.+ \varphi(x)\varphi(y)  \int \left(N\omega_N* \varphi^2\right)\varphi^2 \right|
	\end{multlined}\nn\\
	&\le C \varphi(x) \varphi(y). 
\end{align}
Here we have used 
\[ N\norm{\omega_N*\varphi^2}_{L^\infty} \le C \norm{|x|^{-1}*\varphi^2}_{L^\infty}\le C \]
which is similar to \eqref{eq:omega*phi2}. Thus
\begin{align} \label{eq:GN-Gamma-3}
	&\Re N \iint V_N(x-y) \varphi(x) \varphi(y) \alpha(x,y) \dd x \dd y \nn\\
	&\le \Re N \iint V_N(x-y) \varphi(x) \varphi(y) \big(k(x,y) + C\varphi(x)\varphi(y)\big) \dd x \dd y \nn\\
	&= - N^2 \iint (V_N\omega_N)(x-y) \varphi^2(x) \varphi^2(y) \dd x \dd y + C N \iint V_N(x-y) \varphi^2(x) \varphi^2(y) \dd x \dd y \nn\\
	&= - N^2 \int \left((V_N(1-f_N))*\varphi^2\right)\varphi^2 +O(1). 
\end{align}
In the last estimate we have used 
\begin{equation} \label{eq:GN-Gamma-easyyy}
	N \iint \dd x \dd y V_N(x-y) \varphi^2(x) \varphi(y)^2 \le C \norm{\varphi}_{L^\infty}^2  \norm{\varphi}_{L^2}^2  N\norm{V_N}_{L^1}\le C. 
\end{equation}

\subsubsection*{Interaction energy} Using \eqref{eq:gamma-x-y} and \eqref{eq:GN-Gamma-easyyy} we have
\begin{equation} \label{eq:GN-Gamma-4}
	\iint \dd x \dd y V_N(x-y) \left( \gamma(x,x) \gamma(y,y) + |\gamma(x,y)|^2\right) \le CN^{-1}.
\end{equation}
Moreover,  by \eqref{eq:alpha-x-y} and the Cauchy--Schwarz inequality, 
\[
	|\alpha(x,y)|^2  \le (1+N^{-1}) |k(x,y)|^2 + CN |\varphi(x)|^2 |\varphi(y)|^2. 
\]
Therefore,
\begin{equation} \label{eq:GN-Gamma-5}
	\iint V_N(x,y) |\alpha(x,y)|^2 \dd x \dd y \le  N^2 \int \left(\left(V_N (1-f_N)^2\right) *\varphi^2\right)  \varphi^2 + C.  
\end{equation}

\subsubsection*{Conclusion} Inserting \eqref{eq:GN-Gamma-1}, \eqref{eq:GN-Gamma-2}, \eqref{eq:GN-Gamma-3}, \eqref{eq:GN-Gamma-4}, \eqref{eq:GN-Gamma-5} in \eqref{eq:Tr-HN-GN}-\eqref{eq:GN-Gamma} we find that 
\begin{align*}
	\Tr(H_N\Gamma_N)& \le N \int \left(|\nabla\varphi|^2 + \Vext \varphi^2\right) + \frac{N^2}{2} \int \left(V_N \ast \varphi^2\right)  \varphi^2 \nn\\
	& \quad + \frac{N^2}{2}  \int \left((V_N f_N(1-f_N))*\varphi^2\right)\varphi^2  -  N^2 \int \left((V_N (1-f_N))* \varphi^2\right) \varphi^2 \\
	&\quad +\frac{N^2}{2}\int \left((V_N (1-f_N)^2) *\varphi^2\right)  \varphi^2 + C\\
	&= N \int \left(|\nabla\varphi|^2 + \Vext \varphi^2 \right) + \frac{N^2}{2} \int \left((V_N f_N)  \ast \varphi^2\right)  \varphi^2 +C.
\end{align*}
Finally, by Young's inequality we have
\begin{align*}
	\frac{N^2}{2} \int \left(V_N f_N  \ast \varphi^2\right)  \varphi^2 &\le \frac{N^2}{2} \norm{(V_Nf_N)*\varphi^2}_{L^2(\R^3)} \norm{\varphi^2}_{L^2(\R^3)}\\
	&\le \frac{N^2}{2} \norm{V_Nf_N}_{L^1(\R^3)} \norm{\varphi^2}_{L^2(\R^3)}^2= 4\pi \ao N \int \varphi^4. 
\end{align*}
Thus
\begin{align*}
\Tr(H_N\Gamma_N) \le N \int \left(|\nabla\varphi|^2 + \Vext \varphi^2 \right) + 4\pi \ao N \int  \varphi^4 + C = N e_{\rm GP} +C.
\end{align*}
Finally, by the variational principle
\[
	E_N = \inf {\rm Spec}(H_N) \le \frac{\Tr(H_N\Gamma_N) }{\Tr \Gamma_N} \le \frac{N e_{\rm GP} +C}{ 1+CN^{-1}} \le N e_{\rm GP} + O(1).
\]
This ends the proof of Lemma \ref{lem:upp}. 
\end{proof}
The proof of Theorem \ref{thm:main} is complete.

\end{document}